\newtheorem{definition}{Definition}[section]
\newenvironment{proof}{{\noindent\it\bf Proof}\quad}{\par}
\newtheorem{theorem}{Theorem}   % gws
\begin{document}
% paper title

\title{FUIM: Fuzzy Utility Itemset Mining}

\author{Shicheng Wan, Wensheng Gan,~\IEEEmembership{Member,~IEEE,} Xu Guo, Jiahui Chen,~\IEEEmembership{Member,~IEEE,} and Unil Yun

\thanks{This work was partially supported by the National Natural Science Foundation of China (Grant Nos. 61902079 and 62002136), and Guangzhou Basic and Applied Basic Research Foundation (Grant Nos. 202102020277  and 202102020928). (Corresponding author: Wensheng Gan and Jiahui Chen)}
	
	\thanks{Shicheng Wan, Xu Guo, and Jiahui Chen  are with the School of Computers, Guangdong University of Technology, Guangzhou 510006, China. (E-mail: scwan1998@gmail.com, csxuguo@gmail.com, and csjhchen@gmail.com)} 
		
	\thanks{Wensheng Gan is with the College of Cyber Security, Jinan University, Guangzhou 510632, China; and with Pazhou Laboratory, Guangzhou 510335, China. (E-mail: wsgan001@gmail.com)}	
	
	\thanks{Unil Yun is with Department of Computer Engineering, Sejong University, Seoul, South Korea (E-mail: yunei@sejong.ac.kr)}

}

% make the title area
\maketitle

\begin{abstract}

	Because of usefulness and comprehensibility, fuzzy data mining has been extensively studied and is an emerging topic in recent years. Compared with  utility-driven itemset mining technologies, fuzzy utility mining not only takes utilities (e.g., profits) into account, but also considers quantities of items in each transaction for discovering high fuzzy utility itemsets (HFUIs). Thus, fuzziness can be regard as a key criterion to select high-utility itemsets, while the exiting algorithms are not efficient enough. In this paper, an efficient one-phase algorithm named Fuzzy-driven Utility Itemset Miner (FUIM) is proposed to find out a complete set of HFUIs effectively. In addition, a novel compact data structure named fuzzy-list keeps the key information from quantitative transaction databases. Using fuzzy-list, FUIM can discover HFUIs from transaction databases efficiently and effectively. Both completeness and correctness of the FUIM algorithm are proved by five theorems. At last, substantial experiments test three terms (runtime cost, memory consumption, and scalability) to confirm that FUIM considerably outperforms the state-of-the-art algorithms.
\end{abstract}

\begin{IEEEkeywords}
	quantitative database, fuzzy-list, fuzzy theory, high fuzzy utility itemset.
\end{IEEEkeywords}

\IEEEpeerreviewmaketitle

\section{Introduction}
\label{sec:introduction}

Association rule mining (ARM) \cite{karthikeyan2014survey}  is a traditional method data mining technique which has been widely applied in many real applications. ARM aims to discover the inner link of frequency and confidence of itemsets from a set of data. In this framework, ARM algorithms try to extract frequent (aka high co-occurrence)  patterns, such as a married man who often buys diapers might also carries a dozen of beer. Another interesting task, a subfield of ARM, frequent itemset mining (FIM) \cite{han2000mining, pei2007h} also has received much attention in recent years. However, FIM only concentrates on  quantity but ignores high unit profit items which are infrequent. For example, diamond always brings high profit but milk is far more cheaper than it, and people can drink milk every day but hardly need diamond in daily life. In this case, high profitable goods will be supposed as the infrequent and uninteresting items. In fact, market retailers are not only in favor of  small profits and quick returns, but also want to earn substantial profits in short period. Hence, FIM algorithms cannot be competent for measuring other important factors in data like risk, profit, or weight of items as well.

Inspired by the utility theory \cite{Hutchison1963}, a new framework called high-utility itemset mining (HUIM) \cite{gan2021survey} was proposed. The utility concept is fairly broad and it can represent unit profit, interest, risk or other useful factor. In order to make this paper easier to understand, we assume utility is to identify how much profit an item/itemset can bring or make for users in the following content. Similar to the support threshold works in FIM, an user-specified threshold named minimum utility (abbreviated as \textit{minUtil}) is used to filter out unpromising items/itemsets. If the real utility of an item/itemset is no less than \textit{minUtil}, it will be supposed as promising since it is a high-utility pattern. In practice, because of utility metric, the mining result of HUIM is more interpretable than that of FIM. Therefore, in the past decades, HUIM algorithms have been further studied to discover valuable knowledge in different applications, such as user behavior analysis \cite{shie2013mining}, website click-stream analysis \cite{chu2008efficient}, and cross-marketing analysis \cite{yen2007mining}. Nevertheless, HUIM is a more challenge task than FIM, because the \textit{downward-closure} property \cite{agrawal1994fast} of FIM does not hold in HUIM, which means that the supersets of a low-utility itemset may be a subset of a high-utility itemset (abbreviated as HUI) incidentally. Hence, Liu \textit{et al.} \cite{liu2005two} proposed an overestimation concept named transaction-weighted utilization (abbreviated as \textit{TWU}) to address this issue. If \textit{TWU} of an itemset is less than the user-specified \textit{minUtil}, it would be an unpromising itemset; otherwise, it maybe a real HUI which needs to be confirmed. After that, there are many studies on improving performance of mining HUIs with different data structures, such as list-based algorithms (e.g., HUI-Miner \cite{liu2012mining} and FHM \cite{fournier2014fhm}), tree-based algorithms (e.g., UP-Growth \cite{tseng2012efficient} and MU-Growth \cite{yun2014high}) and projection-based algorithms (e.g., EFIM \cite{zida2017efim}, and TOPIC \cite{chen2021topic}).

However, a discovered HUI only provides information about its utility and the consisted items for decision makers. In fact, it hardly analyzes other useful information found by HUIM algorithms, such as quantity interval of each item in HUI. For example, sometimes, decision makers want to learn about ``the class of beautiful women" or ``the class of tall men" from the quantitative databases. The two adjectives ``beautiful" and ``tall" are both linguistic terms which cannot be directly described by numerical value. What's more, the fact remains that such inaccurate defined ``classes" plays a pivotal role in human thinking, especially in the explainable artificial intelligence system \cite{dovsilovic2018explainable}, pattern recognition and communication of information. Thus, fuzzy set theory \cite{zadeh1965fuzzy} which is simplicity and comprehensibility has been widely studied. Wang \textit{et al.} \cite{wang2009fuzzy} firstly proposed a new task named fuzzy utility mining (FUM). They integrated the fuzzy set theory and high-utility pattern mining algorithm to discover high fuzzy utility itemset (abbreviated as HFUI) from quantitative transaction databases, but it still does not keep the \textit{downward-closure} property. Due to above issues, Lan \textit{et al.} \cite{lan2015fuzzy} proposed a new fuzzy utility algorithm named two-phase fuzzy utility mining (abbreviated as TPFU), which considered external utility (i.e., unit profit) and internal utility (i.e., quantity) of items and the minimum operator principle of fuzzy set theory, to find out a complete set of real HFUIs in quantitative database. Since the \textit{downward-closure} property in fuzzy utility mining cannot be kept, they proposed an efficient fuzzy utility upper-bound model (simplified as FUUB) to solve this issue. With the fuzzy set theory, a set of high fuzzy utility itemsets is a class of objects with a continuum of grades by membership function. For example, given a quantitative itemset $\{milk(10), bread(30)\}$, in which the numbers represent the purchase quantities of corresponding goods. And we assume the unit profits of milk is \$1 and bread is \$6 respectively. The membership function is consist of three fuzzy regions: \textit{High}, \textit{Middle} and \textit{Low}. Then, the occurred quantities of two items can be converted into two different fuzzy sets $f_{milk}$ = $\{$1/\textit{milk.Low}, 0/\textit{milk.Middle}, 0/\textit{milk.High}$\}$ and $f_{bread}$ = $\{$0.6/\textit{bread.Low}, 0.4/\textit{bread.Middle}, 0/\textit{bread.High}$\}$. Then we can get two distinct fuzzy itemsets $\{$\textit{milk.Low}, \textit{bread.Low}$\}$ and $\{$\textit{milk.Low}, \textit{bread.Middle}$\}$. Take the first fuzzy itemset as a sample, the membership values of two fuzzy items $\{$\textit{milk.Low}, \textit{bread.Low}$\}$ are 1 and 0.6. The fuzzy utility of itemset $\{$\textit{milk.Low}, \textit{bread.Middle}$\}$ can be calculated as 0.6 $\times$ ((10 $\times$ \$1) + (30 $\times$ \$6)) = \$114 by the minimal operation. Compared with the original utility (= \$190), fuzzy utility value shows the combination of milk and bread is not very welcomed by local customers.

To summarize, the state-of-the-art TPFU is an Apriori-like algorithm in fact, although it proposed FUUB to reduce the search space during the level-wise manner. Similar to the Apriori algorithm, this two-phase model still suffers from 1) generating a huge number of candidates; 2) scanning database repeatedly; and 3) consuming too much runtime and memory. In light of these challenges, we propose the remaining fuzzy utility concept and design an efficient algorithm named \textbf{F}uzzy \textbf{U}tility-driven \textbf{I}temset \textbf{M}iner (FUIM) in this paper. The novel algorithm is proposed to efficiently identify high fuzzy utility itemsets (HFUIs). At last, experimental results show that FUIM has a good performance in terms of execution efficiency under various parameter settings. The key contributions of this work can be summarized as follows.

\begin{itemize}
	\item  We utilize the \textit{downward-closure} property of fuzzy utility itemset mining to find out HFUIs, and reduce the resource consumption effectively.
	
	\item  We design a novel data structure called fuzzy-list. It not only compresses the whole fuzzy utility message about a fuzzy itemset, but also provides necessary information for whether the fuzzy itemset should be cut off or not.
	
	\item  We firstly propose the remaining fuzzy itemset, which help extend low level HFUIs to high level HFUIs. And its corresponding remaining fuzzy utility is used to calculate a tight upper bound, which can substantially reduce the search space and memory consumption.
	
	\item  We test enough experiments on popular benchmarks, including synthetic and real datasets, demonstrating our novel algorithm can effectively discover complete set of HFUIs in detail.
	
\end{itemize}

The following content of this paper is organized as follows. In Section \ref{sec:relatedWork}, we briefly review the related work. The preliminaries and problem statement are presented in Section \ref{sec:preliminaries}. The proposed FUIM algorithm with detailed data structure and upper-bounds are described in Section \ref{sec:algorithm}. Several experimental evaluations verifying the efficiency and effectiveness of the FUIM approach are shown in Section \ref{sec:experimental}. Conclusion and future work are finally presented in Section \ref{sec:conclusion}.

\section{Related Work}
\label{sec:relatedWork}

This section briefly reviews related studies about high-utility itemset mining and fuzzy utility mining.

\subsection{High-Utility Itemset Mining}

In high-utility itemset mining (HUIM) \cite{gan2021survey}, both quantities and unit utility of items are considered. Compared with Apriori \cite{agrawal1994fast} (a level-wise association rule mining algorithm), HUIM aims to discover high-utility itemsets which are more explainable and meaningful than frequent ones to users. Due to its important practical applications, HUIM has gradually become an emerging research task in last decades. In 2003, Chan \textit{et al.} \cite{chan2003mining} introduced a idea about utility mining field, and Yao \textit{et al.} \cite{yao2006unified} then proposed a strict unified framework. However, as we described in previous content, the most difficult challenge of utility mining is the low-utility items may be contained in a high-utility itemset (HUI), which the \textit{download-closure} property belongs to frequent itemset mining does not hold in. After that, Liu \textit{et al.} \cite{liu2005two} proposed a novel model using transaction-weighted utilization (\textit{TWU}) to discover HUIs efficiently. Based on the developed transaction-weighted downward closure (TWDC) property, it confirms that a HUI is impossible including any uninteresting itemsets. Their algorithm is updated based on the Apriori method, which follows the generate-and-test mechanism. Therefore, the multiply execution times for scanning database is inevitable, and resource consumption is unacceptable too.

Except the previous algorithms we have introduced, IHUP \cite{ahmed2009efficient} discovered incremental and interactive HUIs based on tree structure well. At the same time, Tseng \textit{et al.} proposed two tree-based algorithms namely UP-growth \cite{tseng2010up} and UP-growth+ \cite{tseng2012efficient}. With the compact utility pattern tree and several efficient pruning strategies, it had deeply reduced the overestimated utility values and enhance the performance about mining. Recently, Liu \textit{et al.} \cite{liu2012mining} proposed a novel data structure named utility list and the HUI-Miner algorithm. In fact, HUI-Miner performs better than IHUP because it can effectively find out complete HUIs without generating candidates. Each item/itemset owns a unique compact utility list structure, and the utility-list actually is consisted of a tuple of three terms: 1) transactions' ID that represents which transaction contain this item/itemset; 2) the real utility of item/itemset in these transactions; and 3) the utility of extended pattern of item/itemset in these transactions. Due to all key information is compressed in these utility-lists, it is no need to generate candidates in advance and then filter out low-utility itemsets. Experiments also show that HUI-Miner considerably outperforms the previous algorithms for mining HUIs \cite{liu2012mining}. Later, inspired by HUI-Miner, FHM \cite{fournier2014fhm} adopted the same data structure. Furthermore, it proposed an Estimated Utility Co-occurrence Structure to discover HUIs. In recent years, Zida \textit{et al.} \cite{zida2017efim} adopted pseudo-projection technology to reduce the search space significantly, which outperforms HUI-Miner and FHM in most databases. In addition, researchers also developed variations of the problem of HUIM such as top-$k$ HUIM \cite{gan2020tophui,tseng2015efficient}, on-shelf HUIM \cite{chen2020osumi,lan2011discovery}, discovering HUIs in dynamic environment \cite{gan2018survey,lin2015fast}, finding out the concise and lossless representation of HUI \cite{tseng2014efficient}, and mining the up-to-date HUIs \cite{lin2015efficient}. Besides, it should be pointed out that all the above algorithms we listed rely on the \textit{TWU} model. If \textit{TWU} value of an itemset is less than \textit{minUtil}, then we can directly suppose this itemset is not a potential HUI. On the contrary, it needs to compute its real utility to determine whether is a HUI or not. Many other advanced developments for utility mining can be referred to literature review \cite{gan2021survey}.

\subsection{Fuzzy Utility Mining}

Thanks to the comprehensibility and simplicity attributes, the fuzzy set theory \cite{zadeh1965fuzzy} has been widely used in various intelligent systems or other applications to improve flexibility of decision making \cite{lin2010linguistic}. In generally, membership function will compute fuzzy values of distinct patterns within an interval [0, 1], and the value indicates the membership level of pattern in different classes. According to the membership level, decision makers will understand more accurately customers' preference. In 1998, Kuok \textit{et al.} \cite{kuok1998mining} proposed a new research task named fuzzy data mining, which had successfully integrated the fuzzy set theory with data mining techniques. According to fuzzy set theory, the trouble introduced in Section \ref{sec:introduction} is easy to solve. The quantity of each item in transactions will be converted into linguistic regions, which are more suitable for human's mind. Compare with fuzzy data mining domain, traditional association rules \cite{agrawal1994fast} and quantitative rules \cite{chan1997mining, hong1999mining} offer interesting patterns without quantitative knowledge like linguistic regions \cite{hong2016survey}. In traditional HUIM field, an HUI provides the contained items and its total utility information. HUIM is so intuitive and has been applied in few cases. In real life applications, however, some one like eating a piece of bread but some prefer three pieces. Because the traditional HUIM algorithms hide the quantity information in results, decision makers does not get the original quantity information. The high-utility itemset only offers message that bread is a profitable good, therefore, a suitable promotion quantity of pattern is quite important. However, due to a traditional item/itemset will be converted into some fuzzy items/itemsets (depend on the membership function). Actually, the fuzzy utility mining task is more complicated than HUIM.

Later, Wang \textit{et al.} \cite{wang2009fuzzy} combined fuzzy set concept with HUIM technique to discover high fuzzy utility itemsets (abbreviated as HFUIs). Specifically, they defined a membership function which maps quantitative attribute of an item/itemset to the corresponding linguistic region value, and then calculated the fuzzy utility to find HFUIs. What's more, as mentioned previously, fuzzy theory adopts a minimal operator to evaluate the overlap value of linguistic region value of distinct items/itemsets. One of the biggest problems of \cite{wang2009fuzzy} is ignoring to obtain the common degree values of fuzzy itemsets. After that, Lan \textit{et al.} \cite{lan2015fuzzy} introduced a new upper-bound model for fuzzy utility mining to improve performance. Since their method is still Apriori-like method which may easily suffer from the costly computation and memory. Chen \textit{et al.} \cite{chen2014actionable} introduced the concept of actionable high-coherent-utility fuzzy itemset. Besides, several methods for temporal-based fuzzy utility mining \cite{hong2020one,huang2017temporal}, like level-wise methods and tree-based methods, are proposed to deal with temporal data. All in all, it motivates us to continue to explore this issue. In this paper, we develop an efficient fuzzy utility mining algorithm which performs better than the state-of-the-art TPFU algorithm.

\section{Preliminaries and Problem Formulation}
\label{sec:preliminaries}

In this section, the frequently used notations in this paper are firstly given in Table \ref{tab:table_Notation}. We also adopt some definitions from previous studies \cite{lan2015fuzzy,chen2014actionable} for clear expression of the research issue. In addition, Table \ref{tab:database} is a simple quantitative database. It is consisted of ten transactions with five fuzzy items \{$A$, $B$, $C$, $D$ and $E$\}, and the corresponding unit utility of each item is \$2, \$6, \$3, \$8, and \$10, respectively. Finally, the problem definition about fuzzy-driven utility itemset mining (FUIM) is formalized.

\begin{table}[!htbp]
	\centering
	\small
	\caption{Summary of notations}
	\label{tab:table_Notation}
	\begin{tabular}{|c|l|}
		\hline
		\textbf{Symbol} & \makebox[6cm][c]{\textbf{Description}}  \\ \hline
		$I$ &  A set of $m$ items, \textit{I} = \{\textit{x}$_{1}$, \textit{x}$_{2}$, $\ldots$, \textit{x}$_{m}$\}. \\ \hline		
		$X$ & An itemset $X$ = $\{x_1$, $x_2$, $ \ldots, x_i\}$. \\ \hline		
		$\mathcal{D}$	&  A quantitative database, $\mathcal{D}$ = \{\textit{T}$_{1}$, \textit{T}$_{2}$, $\ldots$, \textit{T$_{j}$}\}.  \\ \hline
		
		$\gamma$ &   A  minimum fuzzy utility threshold. \\ \hline	
		
		$q(x_i, T_j)$ &   The occurred quantity of an item $x_i$ in $T_j$. \\ \hline
		$eu(x_i)$ &   Each item $x_i \in I$ has an external utility. \\ \hline
		
		\textit{fu} &  The fuzzy utility value. \\ \hline
		\textit{mfu} &  The maximal fuzzy utility value. \\ \hline
		\textit{mtfu} &  The maximal transaction fuzzy utility value. \\ \hline
		\textit{rfu} &  The remaining fuzzy utility value. \\ \hline
		\textit{HFUUBI}  &  The high fuzzy utility upper-bound itemset. \\ \hline	
		\textit{HFUI} &   The high fuzzy utility itemset. \\ \hline
	\end{tabular}
\end{table}

\subsection{Fuzzy-based Concepts}

\begin{definition}
	\rm The fuzzy set $f_{ij}$ of the quantitative value $q(x_i, T_j)$ of the $i$-th item $x_i$ in the $j$-th transaction $T_j$ can be denoted as $f_{ij}$ = $\big($$\frac{f_{ij1}}{R_{i1}}$ + \ldots + $\frac{f_{ijl}}{R_{il}}$ + \ldots + $\frac{f_{ijm}}{R_{im}}$$\big)$ with the given membership function, where $m$ is the number of regions of the item $x_i$. $R_{il}$ is the $l$-th fuzzy region of $x_i$, and $f_{ijl}$ is the fuzzy membership value of the $q(x_i, T_j)$ of the $i$-th item $x_i$ in the $l$-th fuzzy region $R_{il}$, where $f_{ijl}$ $\in$ [0, 1].
\end{definition}

For example, with Table \ref{tab:database} and Fig. \ref{fig:Membership}, the quantitative value (= 3) of item $A$ in $T_2$ can be converted to $f_{A, 2}$ = (0.3/\textit{A.Low}, 0.7/\textit{A.Middle}, 0/\textit{A.High}) by the membership function.

%%%%%%%%%%   Membership Function  %%%%%%%%%%
\begin{figure}[hbt]
	\centering
	\includegraphics[scale=0.48]{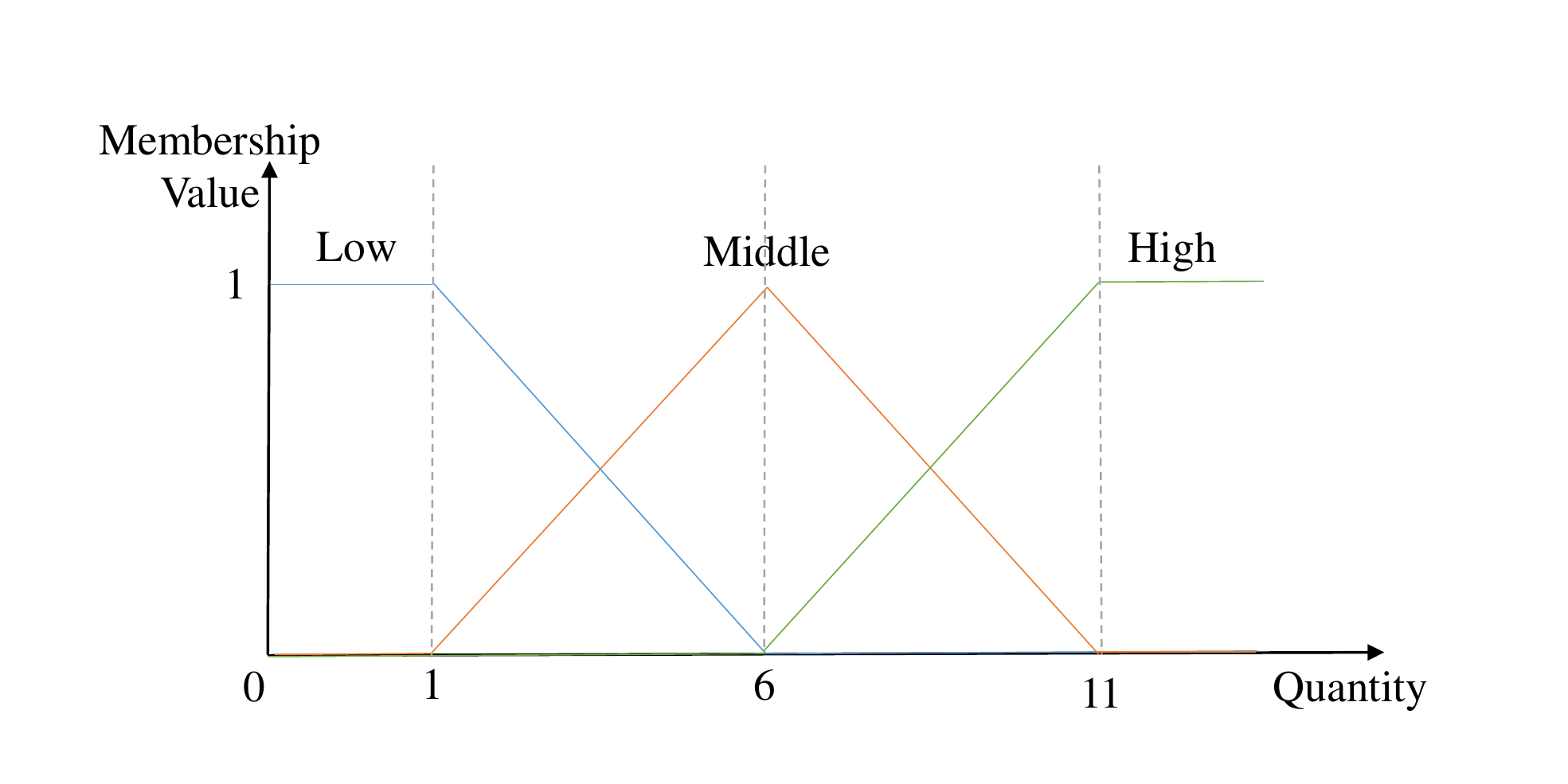}
	\caption{The membership function.}
	\label{fig:Membership}
\end{figure}
%%%%%%%%%%   Membership Function  %%%%%%%%%%

%%%%%%%%%%   A Sample Dataset  %%%%%%%%%%
\begin{table}[!h]
	\begin{center}
		\caption{A sample quantitative database}
		\label{tab:database}
		\begin{tabular}{lccccc}
			\hline
			\textbf{TID} & $A$ & $B$ & $C$ & $D$ & $E$ \\ \hline
			$T_1$         & 0 & 2 & 5 & 2 & 0 \\
			$T_2$         & 4 & 6 & 0 & 0 & 0 \\
			$T_3$         & 0 & 3 & 6 & 0 & 4 \\
			$T_4$         & 2 & 2 & 7 & 0 & 0 \\
			$T_5$         & 2 & 0 & 8 & 0 & 0 \\
			$T_6$         & 6 & 5 & 0 & 4 & 0 \\
			$T_7$         & 4 & 4 & 0 & 7 & 3 \\
			$T_8$         & 0 & 2 & 3 & 0 & 0 \\
			$T_9$         & 0 & 0 & 0 & 3 & 3 \\
			$T_{10}$      & 0 & 0 & 0 & 2 & 0 \\ \hline
		\end{tabular}
	\end{center}
\end{table}
%%%%%%%%%%   A Sample Dataset   %%%%%%%%%%

\begin{definition}
	\rm  The fuzzy utility of the $l$-th fuzzy region of an item $x_i$ in a transaction $T_j$ is defined as: \textit{fu}$_{ijl}(x_i, T_j)$ = $f_{ijl}$ $\times $ $q(x_i, T_j)$ $\times$ \textit{eu}$(x_i)$. Furthermore, the utility of a fuzzy itemset $X$ in $T_j$ is denoted as \textit{fu}$_{jX}$($X$, $T_j$) = $f_{jX}$ $\times$ $\sum_{x_i \subseteq X \land X \subseteq T_j}$($q$($x_i$, $T_j$) $\times$ \textit{eu}($x_i$)), where $f_{jX}$ is the minimal membership value of all items in $X$.
\end{definition}

For example in Table \ref{tab:database}, given a fuzzy itemset $X$ = $\{$\textit{A.Low}, \textit{C.Middle}$\}$ in transaction $T_5$, the membership values of two fuzzy items are 0.3 and 0.6, respectively, based on Fig. \ref{fig:Membership}. According to the minimum operation mechanism, the membership value of 2-itemset $\{$\textit{A.Low}, \textit{C.Middle}$\}$ is 0.3 in $T_5$. Thus, the fuzzy utility of 2-itemset is $fu_{\{5, X\}}(X, T_5)$ = 0.3 $\times$ ((2 $\times$ \$2) + (8 $\times$ \$3)) = \$8.4.

% $\mathcal{D}$
\begin{definition}
	\rm The total fuzzy utility $fu_{il}$ of $l$-th fuzzy region of an item $x_i$ in $\mathcal{D}$ is the summation of fuzzy utility values of all occurrence $x_i$, which is formulated as \textit{fu}$_{il}$($x_i$) = $\sum_{x_i \in T_j \land T_j \subseteq \mathcal{D}}$\textit{fu}$_{ijl}$($x_i$, $T_j$). The total fuzzy utility \textit{fu}$_{X}$ of $X$ takes the same way, whereby \textit{fu}$_{X}$ = $\sum_{X \subseteq T_j \land T_j \subseteq \mathcal{D}}$\textit{fu}$_{jX}$. In addition, in a quantitative database, the \textit{fu}$_{jX}$ is the fuzzy utility of $X$ in the $j$-th transaction.
\end{definition}

For example, the total fuzzy utility of item \textit{E.Low} in $\mathcal{D}$ is $fu_{\{E, 3, Low\}}$ + $fu_{\{E, 7, Low\}}$ + $fu_{\{E, 9, Low\}}$ = \$43.3, and the fuzzy itemset $X$ = $\{$\textit{B.Low}, \textit{C.Middle}$\}$ is $fu_{\{1, X\}}$ + $fu_{\{3, X\}}$ + $fu_{\{4, X\}}$ + $fu_{\{8, X\}}$ = \$78 from Table \ref{tab:database}.

\begin{definition}
	\rm If the fuzzy utility of a fuzzy itemset $X$ is no less than user-specified minimum fuzzy utility threshold $\gamma$ ($fu_{X} \ge \gamma$), then $X$ is a high fuzzy utility itemset (abbreviated as HFUI).
\end{definition}

For example, if set $\gamma$ = \$60\footnote{If not particularly indicated, the following content will always regard $\gamma$ as \$60.}, the fuzzy 2-itemset $\{$\textit{B.Low}, \textit{C.Middle}$\}$ is a HFUI. On the contrary, consider the fuzzy utility of another 2-itemset $\{$\textit{A.Low}, \textit{C.Middle}$\}$ is \$36.8, which is less than \$60. Clearly, it is a low fuzzy utility 2-itemset. The other HFUIs are shown in Table \ref{tab:HFUIs}. 

%%%%%%%%%%   HFUIs w.r.t threshold = $60 Table   %%%%%%%%%%
\begin{table}[!h]
	\begin{center}
		\caption{The high fuzzy utility itemset with $\gamma$ = \$60}
		\label{tab:HFUIs}
		\begin{tabular}{|l|l|}
			\hline
			\makebox[4cm][c]{\textbf{Fuzzy itemset}} & \textbf{Utility} \\ \hline
			$\{$\textit{C.Middle}, \textit{B.Low}$\}$ & \$78 \\ \hline
			$\{$\textit{E.Low}, \textit{D.Middle}$\}$ & \$73.2  \\ \hline
			$\{$\textit{E.Low}, \textit{D.Middle}, \textit{B.Middle}$\}$ & \$66 \\ \hline
			$\{$\textit{E.Low}, \textit{D.Middle}, \textit{A.Middle}, \textit{B.Middle}$\}$ & \$70.8 \\ \hline
			$\{$\textit{D.Middle}$\}$ & \$80 \\ \hline
			$\{$\textit{D.Middle}, \textit{A.Middle}$\}$ & \$64.8 \\ \hline
			$\{$\textit{D.Middle}, \textit{A.Middle}, \textit{B.Middle}$\}$ & \$97.2 \\ \hline
			$\{$\textit{D.Middle}, \textit{B.Middle}$\}$ & \$90.8  \\ \hline
			$\{$\textit{A.Middle}, \textit{B.Middle}$\}$ & \$82.4 \\ \hline
			$\{$\textit{B.Middle}$\}$ & \$88.8  \\ \hline
			$\{$\textit{C.Middle}$\}$ & \$64.8 \\
			\hline
		\end{tabular}
	\end{center}
\end{table}
%%%%%%%%%%   HFUIs w.r.t threshold = $60 Table   %%%%%%%%%%

\subsection{Problem Formulation}

Based on the definitions we introduced above, the problem of fuzzy-driven utility itemset mining is formulated below.

\textbf{Problem statement}. Given a quantitative transaction database $\mathcal{D}$, a user-specified minimum utility threshold $\gamma$, and a user-defined membership function, the goal of fuzzy-driven utility itemset mining is to identify a complete set of HFUIs. Our paper aims to find the itemsets whose fuzzy utilities are no less than $\gamma$ in $\mathcal{D}$.

\section{Design Algorithm}
\label{sec:algorithm}

\subsection{Fuzzy Utility Upper Bound}

As shown in Table \ref{tab:HFUIs}, \{\textit{B.Middle}\} is a high fuzzy utility item but \{\textit{A.Middle}\} is not. However, their superset \{\textit{A.Middle}, \textit{B.Middle}\} is a HFUI because of $fu_{\{A.Middle, B,Middle\}}$ $>$ $\gamma$. This represents that fuzzy utility concept does not hold \textit{downward-closure} property of traditional ARM algorithms, which indicates the fuzzy utility mining task is more difficult than frequent itemset mining. To resolve this issue, we adopt an effective fuzzy utility upper-bound model (abbreviated as FUUB) from Ref. \cite{lan2015fuzzy}. The details about related terms are introduced in the following.

\begin{definition}
	\rm  In transaction $T_j$, the maximum fuzzy utility of a fuzzy item $x_i$ is denoted as \textit{mfu}$_{ij}$ = \textit{max}\{$fu_{ij1}$, $fu_{ij2}$, $\ldots$, $fu_{ijl}$\}, where $fu_{ijl}$ represents the fuzzy utility of the $l$-th fuzzy region of $x_i$ in $T_j$.
\end{definition}

For example, Fig. \ref{fig:Membership} and Table \ref{tab:database} show the membership values of item $B$ in transaction $T_1$ are 0.6/\textit{B.Low}, 0.4/\textit{B.Middle}, and 0/\textit{B.High}, respectively. The unit utility of $B$ is \$6, thus $fu_{\{B.Low, 1\}}$ = 0.6 $\times$ 2 $\times$ \$6 = \$7.2, $fu_{\{B.Middle, 1\}}$ = \$4.8 and $fu_{\{B.High, 1\}}$ = \$0. Obviously, \textit{mfu}$_{\{B, 1\}}$ is \$7.2 with the definition of the maximum fuzzy utility.

\begin{definition}
	\rm The maximum transaction fuzzy utility of an transaction $T_j$ in $\mathcal{D}$ is denoted as \textit{mtfu}$_j$, where \textit{mtfu}$_j$ = $\sum_{x_i \in T_j}$\textit{mfu}$_{ij}$ and \textit{mfu}$_{ij}$ is the maximum fuzzy utility of the $l$-th fuzzy region of $x_i$ in $T_j$ \cite{lan2015fuzzy}.
\end{definition}

For example, in transaction $T_2$, the membership values of two items $A$ an $B$ are $\{$0.4/\textit{A.Low}, 0.6/\textit{A.Middle}, 0/\textit{A.High}$\}$ and $\{$0/\textit{B.Low}, 1/\textit{B.Middle}, 0/\textit{B.High}$\}$, respectively. Therefore, \textit{mtfu}$_2$ = \textit{mfu}$_{\{A.Middle, 2\}}$ + \textit{mfu}$_{\{B.Middle, 2\}}$ = \$40.8.

\begin{definition}
	\rm  Based on definition about the maximum transaction fuzzy utility of transaction, the FUUB of a fuzzy itemset $X$ is formulated as \textit{fuub}($X$) = $\sum_{X \subseteq T_j \land T_j \subseteq \mathcal{D}}$\textit{mtfu}$_j$ \cite{lan2015fuzzy}. In another word, the FUUB of $X$ is the summation of fuzzy utility of transactions containing $X$. As similarly as the definition of HFUI, if \textit{fuub}($X$) is no less than threshold $\gamma$, then we suppose $X$ is a potential pattern called high fuzzy utility upper-bound itemset (abbreviated as HFUUBI); otherwise it is a low fuzzy utility upper-bound itemset, which cannot be a HFUI.
\end{definition}

For example, the Table \ref{tab:hfuubi} lists a complete set of potential fuzzy 1-itemsets. \textit{fuub}($C$) = \$149.4 that $C$ is a HFUUBI, and then its real fuzzy utility is \$64.8 $>$ $\gamma$ from Table \ref{tab:HFUIs}.

%%%%%%%%%%   HFUUB 1-itemset   %%%%%%%%%%
\begin{table}[!h]
	\begin{center}
		\caption{The high fuzzy utility upper-bound of 1-itemset}
		\label{tab:hfuubi}
		\begin{tabular}{lccccc}
			\hline
			\textbf{FHUUBI} & $A$ & $B$ & $C$ & $D$ & $E$ \\ \hline
			\textbf{Utility} & \$225.2 & \$309.8 & \$149.4 & \$216.8 & \$167.2 \\ \hline
		\end{tabular}
	\end{center}
\end{table}
%%%%%%%%%%   HFUUB 1-itemset   %%%%%%%%%%

\begin{theorem}
	\textbf{(The fuzzy-utilization downward closure property \cite{lan2015fuzzy})}
	\label{theo:fuzzyDC}
	\rm Let $I^{k-1}$ be a ($k$-1)-itemset and $I^k$ be a $k$-itemset where $I^{k-1} \subset I^k$. If $I^k$ is a HFUUBI, then $I^{k-1}$ must be a HFUUBI too.
\end{theorem}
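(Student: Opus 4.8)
The plan is to prove this purely by a monotonicity argument on the upper bound $\textit{fuub}(\cdot)$, exploiting the crucial structural fact that the summands $\textit{mtfu}_j$ appearing in its definition depend only on the transaction $T_j$ itself, not on the fuzzy itemset being bounded. Consequently, passing from $I^{k-1}$ to its superset $I^k$ cannot change any individual $\textit{mtfu}_j$; it can only change \emph{which} transactions are summed over. The whole argument therefore reduces to comparing the two index sets and checking a sign condition.

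First I would fix the two supporting transaction collections
$$\mathcal{T}_{k-1} = \{T_j \subseteq \mathcal{D} : I^{k-1} \subseteq T_j\}, \qquad \mathcal{T}_{k} = \{T_j \subseteq \mathcal{D} : I^{k} \subseteq T_j\},$$
and show that $\mathcal{T}_k \subseteq \mathcal{T}_{k-1}$. This is immediate: since $I^{k-1} \subset I^k$, any transaction $T_j$ containing $I^k$ must in particular contain the smaller itemset $I^{k-1}$, so every element of $\mathcal{T}_k$ already lies in $\mathcal{T}_{k-1}$. Second, I would verify from the definition $\textit{mtfu}_j = \sum_{x_i \in T_j}\textit{mfu}_{ij}$ that each term is non-negative: every $\textit{mfu}_{ij}$ is a maximum of fuzzy utilities $fu_{ijl} = f_{ijl}\times q(x_i,T_j)\times\textit{eu}(x_i)$, and membership values, quantities, and external utilities are all non-negative, so $\textit{mtfu}_j \ge 0$ for every $T_j$.

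Combining the two facts, the sum defining $\textit{fuub}(I^{k-1})$ ranges over a superset of the index set of $\textit{fuub}(I^k)$, and every extra summand is non-negative, giving
$$\textit{fuub}(I^k) = \sum_{T_j \in \mathcal{T}_{k}} \textit{mtfu}_j \ \le\ \sum_{T_j \in \mathcal{T}_{k-1}} \textit{mtfu}_j = \textit{fuub}(I^{k-1}).$$
Finally, if $I^k$ is a HFUUBI then $\textit{fuub}(I^k) \ge \gamma$ by definition, so the displayed chain yields $\textit{fuub}(I^{k-1}) \ge \textit{fuub}(I^k) \ge \gamma$, which is exactly the statement that $I^{k-1}$ is a HFUUBI.

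I do not expect any genuine obstacle here; this is a routine monotonicity bound. The only point requiring care is getting the containment direction right — enlarging the itemset \emph{shrinks} the set of supporting transactions, and this reversal is what produces the $\le$ on the upper bounds — together with the small but essential non-negativity check on $\textit{mtfu}_j$, since if some $\textit{mtfu}_j$ could be negative, dropping transactions from the sum might \emph{raise} the total and the inequality would fail.
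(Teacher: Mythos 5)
Your proof is correct and follows essentially the same route as the paper's: both arguments rest on the containment of supporting transaction sets ($\mathcal{T}_k \subseteq \mathcal{T}_{k-1}$, since any transaction containing $I^k$ contains $I^{k-1}$) and the resulting monotonicity of $\textit{fuub}(\cdot)$, followed by the comparison with $\gamma$. Your explicit check that each $\textit{mtfu}_j \ge 0$ is a detail the paper leaves implicit, and it is indeed the reason dropping summands cannot increase the total.
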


\begin{proof}
	Let $T_{I^k}$ be a set of transactions containing $I^k$ and $T_{I^{k-1}}$ be a set of transactions including $I^{k-1}$. Since $I^{k-1} \subset I^{k}$, $\mid$$T_{I^k}$$\mid$ $\subset$ $\mid$$T_{I^{k-1}}$$\mid$. Based on the definition about FUUB, \textit{fuub}($I^{k-1}$) = $\sum_{I^{k-1} \subseteq T_j \land T_j \subseteq \mathcal{D}}$\textit{mtfu}$_j$ $\ge$ $\sum_{I^k \subseteq T_j \land T_j \subseteq \mathcal{D}}$\textit{mtfu}$_j$ = \textit{fuub}($I^k$).
\end{proof}

\begin{theorem}
	\textbf{(The fuzzy utility upper-bound constrain \cite{lan2015fuzzy})}
	\label{theo:upperbound}
	\rm As introduced in previous definitions, HFUIs is the collection of high fuzzy utility itemsets in quantitative database $\mathcal{D}$ and HFUUBIs is the collection of all potential high fuzzy utility itemsets in $\mathcal{D}$. Therefore, HFUIs is a subset of HFUUBIs (HFUIs $\subseteq$ HFUUBIs). This means that if a fuzzy itemset with low fuzzy utility upper-bound, its real fuzzy utility must be less than $\gamma$.
\end{theorem}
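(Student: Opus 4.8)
The plan is to reduce the set containment $\textrm{HFUIs} \subseteq \textrm{HFUUBIs}$ to the single pointwise inequality $fu_X \le \textit{fuub}(X)$, valid for every fuzzy itemset $X$. Once this is established, the theorem is immediate: if $X$ is a HFUI then $fu_X \ge \gamma$ by Definition~4, and combining with $fu_X \le \textit{fuub}(X)$ gives $\textit{fuub}(X) \ge \gamma$, so $X$ is a HFUUBI by Definition~7. The closing sentence of the theorem is just the contrapositive of this implication: if $\textit{fuub}(X) < \gamma$ then $fu_X < \gamma$. So the entire content of the theorem sits in the bound $fu_X \le \textit{fuub}(X)$.

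To prove that bound I would work transaction by transaction, establishing $fu_{jX} \le \textit{mtfu}_j$ for each transaction $T_j$ with $X \subseteq T_j$, since both $fu_X$ and $\textit{fuub}(X)$ sum over the same index set $\{T_j : X \subseteq T_j \land T_j \subseteq \mathcal{D}\}$. The per-transaction bound is the heart of the argument and rests on three observations, all using that membership values, quantities, and external utilities are non-negative. First, distribute the minimal membership value across the sum to write $fu_{jX} = \sum_{x_i \in X} f_{jX}\, q(x_i,T_j)\, eu(x_i)$. Second, since $f_{jX}$ is by definition the minimum of the membership values of the fuzzy items in $X$, we have $f_{jX} \le f_{ijl}$ for the region $R_{il}$ that $x_i$ contributes to $X$, and since $\textit{mfu}_{ij}$ is the maximum fuzzy utility over all regions of $x_i$ in $T_j$, each term obeys $f_{jX}\, q(x_i,T_j)\, eu(x_i) \le f_{ijl}\, q(x_i,T_j)\, eu(x_i) = fu_{ijl} \le \textit{mfu}_{ij}$. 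Third, because $X \subseteq T_j$, enlarging the index set from the items of $X$ to all items of $T_j$ only adds non-negative terms.

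Chaining these observations yields the inline comparison $fu_{jX} = \sum_{x_i \in X} f_{jX}\, q(x_i,T_j)\, eu(x_i) \le \sum_{x_i \in X} \textit{mfu}_{ij} \le \sum_{x_i \in T_j} \textit{mfu}_{ij} = \textit{mtfu}_j$. Summing over all transactions containing $X$ then gives $fu_X = \sum_{X \subseteq T_j \land T_j \subseteq \mathcal{D}} fu_{jX} \le \sum_{X \subseteq T_j \land T_j \subseteq \mathcal{D}} \textit{mtfu}_j = \textit{fuub}(X)$, which is exactly the inequality needed, and the containment follows as described above.

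The main obstacle is not depth but keeping the orientation of the inequalities straight in the second observation, where two upward relaxations are applied in succession: first $f_{jX}$ is enlarged to $f_{ijl}$ (justified because $f_{jX}$ is a minimum), then $fu_{ijl}$ is enlarged to $\textit{mfu}_{ij}$ (justified because $\textit{mfu}_{ij}$ is a maximum over regions). I would make the non-negativity of $f_{ijl}$, $q(x_i,T_j)$, and $eu(x_i)$ explicit, since it is what licenses both the term-by-term monotonic replacements and the dropping of terms in the third observation; without non-negativity the direction of these steps could fail.
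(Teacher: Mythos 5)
Your proposal is correct and takes essentially the same route as the paper's own proof: both bound the fuzzy utility transaction by transaction via $\textit{fu}_{jX} \le \sum_{x_i \in X} \textit{fu}_{ijl} \le \sum_{x_i \in X} \textit{mfu}_{ij} \le \textit{mtfu}_j$ and then sum over the transactions containing $X$. If anything, your bookkeeping is more careful than the paper's, which writes two of these steps (passing from $\textit{mtfu}_j$ to $\sum_{x_i \in X}\textit{mfu}_{ij}$, and from $\sum_{x_i \in X}\textit{fu}_{ijl}$ to $\textit{fu}_{jX}$) as equalities when in general they are only inequalities in the required direction --- precisely the two relaxations you make explicit.
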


\begin{proof}
	For $\forall x_i \in X$ $\land X \subseteq $ $T_j \subseteq \mathcal{D}$, then: \\
	$\because$ \textit{mfu}$_{ij}$ = \textit{max}$\{$\textit{fu}$_{ij1}$, \ldots, \textit{fu}$_{ijl}$$\}$
	\begin{tabbing}
		$\therefore$ $fuub(X)$ \= $=\sum_{X \subseteq T_j \land T_j \subseteq \mathcal{D}}$\textit{mtfu}$_{ij}$ \\
		\>$=\sum_{X \subseteq T_j \land T_j \subseteq \mathcal{D}}$$\sum_{x_i \in X}$\textit{mfu}$_{ij}$ \\
		\>$\ge\sum_{X \subseteq T_j \land T_j \subseteq \mathcal{D}}$$\sum_{x_i \in X}$\textit{fu}$_{ijl}$ \\
		\>$=\sum_{X \subseteq T_j \land T_j \subseteq \mathcal{D}}$\textit{fu}$_{jX}$ \\
		\>$=fu_{X}$.
	\end{tabbing}
	Therefore, if \textit{fubb}($X$) $<$ $\gamma$ that \textit{fu}$_{X}$ $<$ $\gamma$.
\end{proof}

\subsection{Fuzzy-list Structure}

\rm Before introduce the new fuzzy data structure, we firstly claim the order adopted in our novel algorithm. Let $\succ$ be a global order on fuzzy items from $I$, then an order sequence ``$C$ $\succ$ $E$ $\succ$ $D$ $\succ$ $A$ $\succ$ $B$" based on fuzzy utility upper-bound is produced. Furthermore, the transaction quantitative database is considered as revised too. Given a fuzzy itemset $X$, the set of all items after $X$ in $T_j$ is denoted as $T_j / X$.

\begin{definition}
	\rm  As shown in Fig. \ref{fig:liststructure_1}, we keep each fuzzy itemset $X$ with three elements that make up the fuzzy-list structure (simply named \textit{ful}): the transaction identifier (\textit{tid}), the internal fuzzy utility (\textit{ifu}), and the remaining fuzzy utility (\textit{rfu}). The term of \textit{tid} is transactions contain $X$. The term of \textit{ifu} represents the real fuzzy utility of $X$ in the transaction $T_j$, is defined as \textit{ifu}($X$, $T_j$) = \textit{fu}$_{jX}$($X$, $T_j$). The term of \textit{rfu} means the remaining maximal fuzzy utility value obtained by union operating, is denoted as \textit{rfu}($X$, $T_j$) = $\sum_{x_i \in T_j / X}$\textit{mfu}$_{ij}$.
\end{definition}

%%%%%%%%%%%   fuzzy-list of 1-itemset   %%%%%%%%%%%
\begin{figure*}[hbtp]
	\centering
	\includegraphics[scale=0.56]{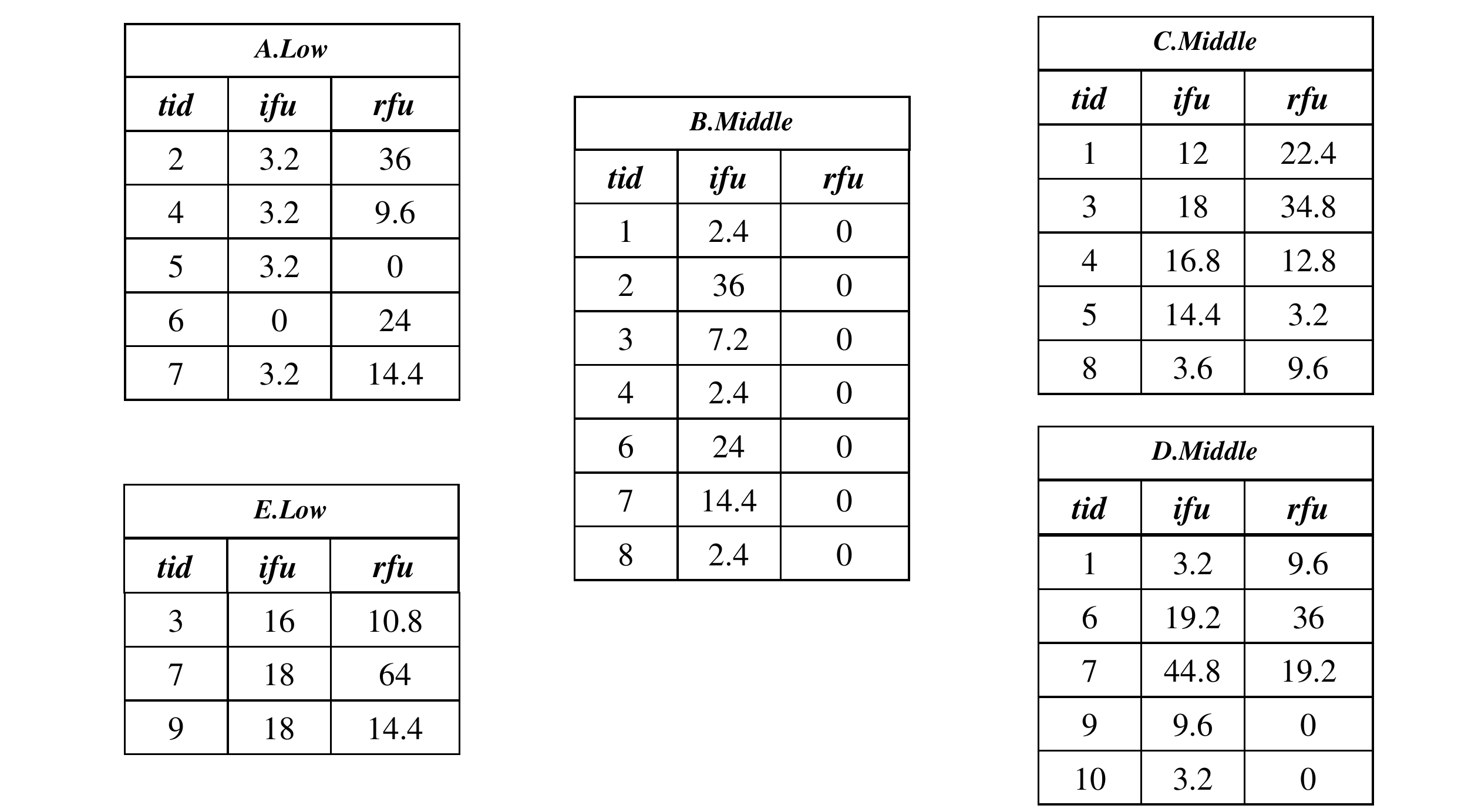}
	\caption{The constructed fuzzy-list structures of 1-itemsets.}
	\label{fig:liststructure_1}
\end{figure*}
%%%%%%%%%%%   fuzzy-list of 1-itemset   %%%%%%%%%%%

Without database scanning, we can join fuzzy-lists of two distinct ($k$-1)-itemsets to form a new fuzzy-list structure of $k$-itemsets ($k \ge 2$). During the intersecting process, the tuples with the same \textit{tid} will be combined together. In order to accelerate the process, assume each row of list is with \textit{tid} ascending order, which can use binary search method to locate target tuple. If we suppose the size of two distinct lists are $m$ and $n$ respectively, the time complexity is $O$($m$ + $n$) in the worst, because all \textit{tids} in list are ordered. At the same time, in Fig. \ref{fig:liststructure_2}, there are three columns in fuzzy-list. The \textit{ifu} (the second column) indicates the real fuzzy utility of item/itemset in corresponding transaction and is easy to be calculated. Then the \textit{rfu} (the third column) represents the remaining fuzzy utility of item/itemset in corresponding transaction left.  As for \textit{rfu} of new generated fuzzy $k$-itemsets ($k \ge 2$), assume the minimal \textit{rfu} value of the merged ($k$-1)-itemsets as new remaining value. As shown in Fig. \ref{fig:liststructure_2}, the final eligible results about fuzzy 2-itemsets are listed as $\{$\textit{B.Middle}, \textit{A.Low}$\}$, $\{$\textit{E.Low}, \textit{B.Middle}$\}$, $\{$\textit{E.Low}, \textit{C.Middle}$\}$ and so on. Thus, the final fuzzy utility of $X$ is the summation of all internal fuzzy utility values, is denoted as \textit{sumIfu}($X$) = $\sum_{T_j \in ful(X)}$\textit{ifu}($X$, $T_j$). And remaining region terms are plan to fuzzy items in quantitative transaction $T_j$. Therefore, the remaining fuzzy utility of $X$ is the summation of the third column elements, is denoted as \textit{sumRfu}($X$) = $\sum_{x_i \in T_j / X \land T_j \subseteq ful(X)}$\textit{rfu}($x_i$, $T_j$).

For example, in Fig. \ref{fig:liststructure_2}, the \textit{sumIfu} of fuzzy itemset $\{$\textit{E.Low}, \textit{D.Middle}$\}$ is \$51.6 + \$21.6 = \$73.2. And \textit{sumRfu}(\textit{E.Low}) is computed as \$10.8 + \$64 + \$14.4 = \$89.2 in Fig. \ref{fig:liststructure_1}.

%%%%%%%%%%%   fuzzy-list of 2-itemset   %%%%%%%%%%%
\begin{figure*}[hbtp]
	\centering
	\includegraphics[scale=0.59]{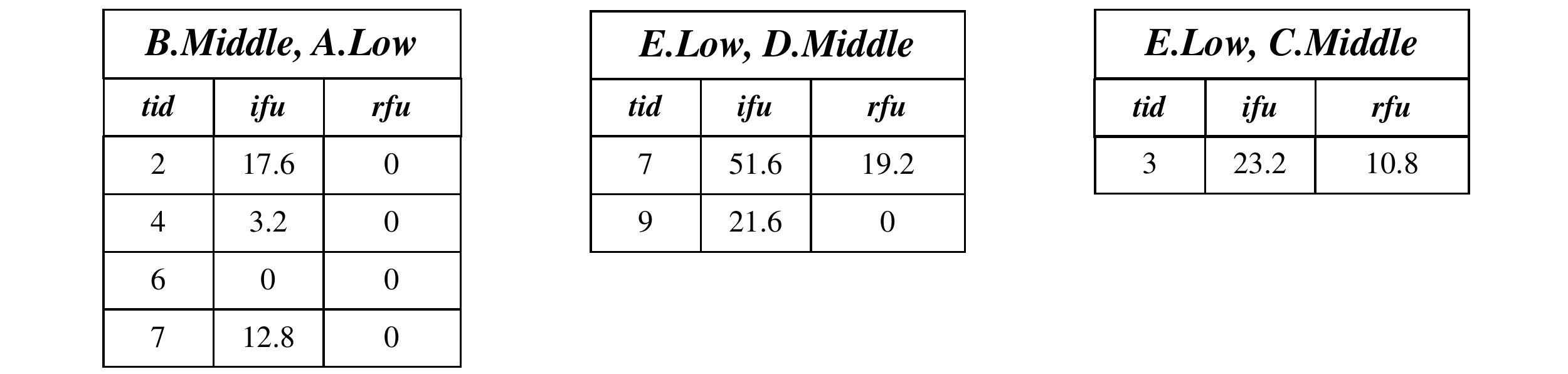}
	\caption{The fuzzy-list structures of fuzzy 2-itemsets.}
	\label{fig:liststructure_2}
\end{figure*}
%%%%%%%%%%%   fuzzy-list of 2-itemset   %%%%%%%%%%%

\subsection{Searching Space}

\begin{figure}[hbtp]
	\centering
	\includegraphics[scale=0.45]{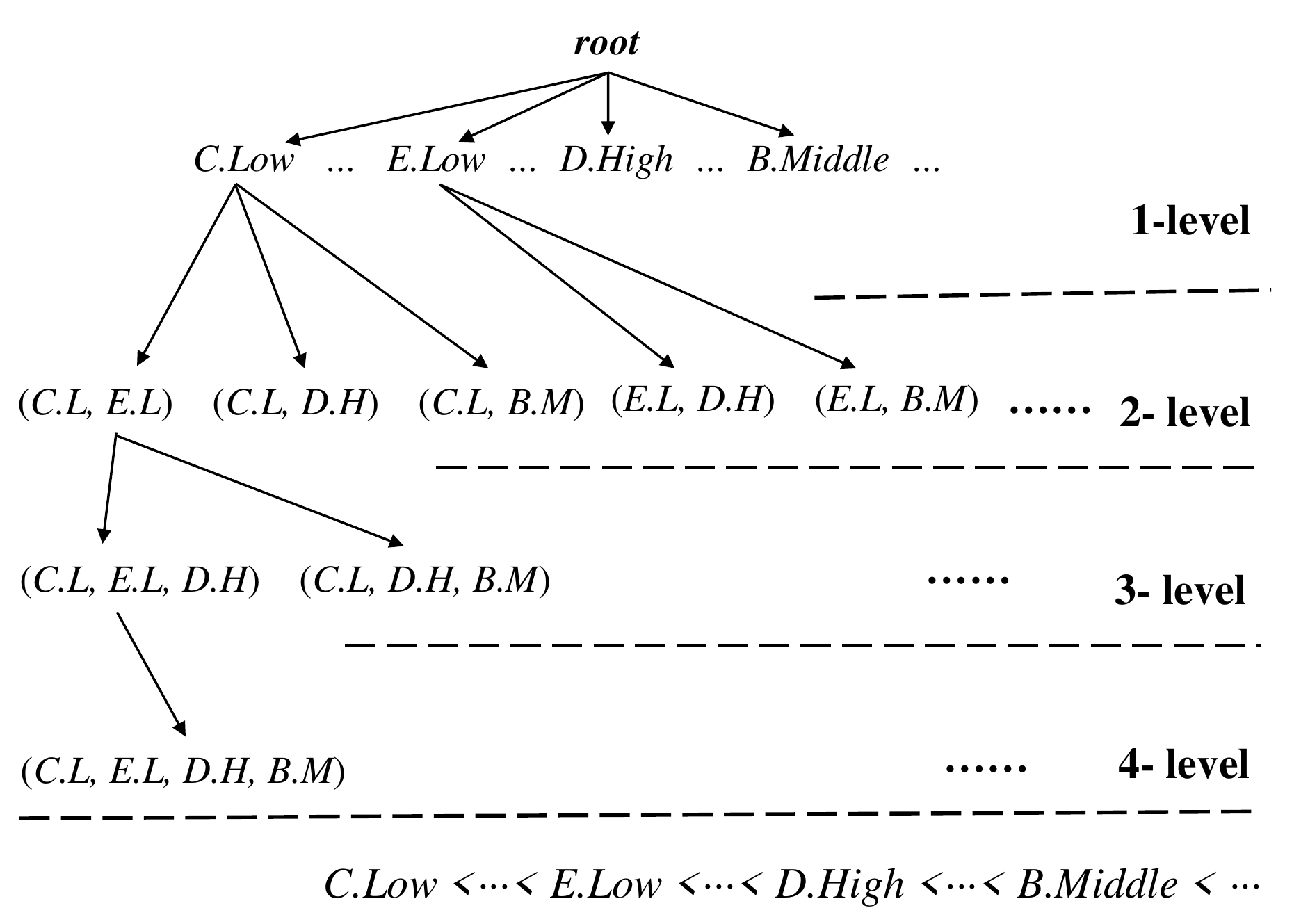}
	\caption{An enumeration tree of the used example.}
	\label{fig:enumeratetree}
\end{figure}

The searching space of fuzzy utility mining problem can be regard as a set-enumeration tree \cite{liu2005two}. In this paper, with a total order on all fuzzy items ($C$ $\succ$ $E$ $\succ$ $D$ $\succ$ $A$ $\succ$ $B$), a set-enumeration tree is depicted in Fig. \ref{fig:enumeratetree}. As we can see clearly, exhaustive search will be excessively time-consuming because of a huge number of nodes. If there is $n$ fuzzy items, it has to check $2^n$ itemsets in all. Thus, we adopt two pruning strategies to reduce the searching space. The specific description is as follows:

\begin{theorem}
	\textbf{(The internal fuzzy utility constrain)}
	\label{theo:sumIfu}
	\rm According to fuzzy-list structure, given a fuzzy itemset $X$, the internal fuzzy utility is its real fuzzy utility values, thus if the \textit{sumIfu}($X$) is no less than the minimum fuzzy utility threshold $\gamma$, it will be a high fuzzy utility itemset.
\end{theorem}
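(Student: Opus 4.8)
The plan is to reduce the statement to a single bookkeeping identity, namely that \textit{sumIfu}($X$) coincides with the total fuzzy utility $fu_X$ introduced earlier. Once that identity is in hand, the hypothesis \textit{sumIfu}($X$) $\ge \gamma$ is literally the defining inequality $fu_X \ge \gamma$, and the conclusion that $X$ is a high fuzzy utility itemset follows immediately from the definition of an HFUI. So the whole argument is an unwinding of definitions, and the only substantive content is verifying that the fuzzy-list faithfully records the per-transaction fuzzy utilities of $X$ over exactly the right set of transactions.

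First I would unwind \textit{sumIfu} against the fuzzy-list definition: by construction \textit{sumIfu}($X$) $= \sum_{T_j \in ful(X)}$\textit{ifu}($X$, $T_j$), and the \textit{ifu} field is defined so that \textit{ifu}($X$, $T_j$) $= fu_{jX}(X, T_j)$. Next I would establish the key correctness property of the \textit{tid} column, namely that $T_j$ appears as a row of $ful(X)$ if and only if $X \subseteq T_j$. This holds at the base case for $1$-itemsets by the construction in Fig.~\ref{fig:liststructure_1}, and is preserved by the join that builds $ful(X)$ from the fuzzy-lists of smaller itemsets, since a merged tuple is retained precisely when the two parent rows share a common \textit{tid}. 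Combining these two facts, the index set $\{T_j : T_j \in ful(X)\}$ equals $\{T_j : X \subseteq T_j \land T_j \subseteq \mathcal{D}\}$, and therefore
\[
\textit{sumIfu}(X) = \sum_{X \subseteq T_j \land T_j \subseteq \mathcal{D}} fu_{jX} = fu_X,
\]
where the last equality is exactly the definition of the total fuzzy utility. Consequently \textit{sumIfu}($X$) $\ge \gamma$ forces $fu_X \ge \gamma$, so $X$ is an HFUI.

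Because the claim is essentially definitional, I do not expect any step to be genuinely difficult; the one point deserving care is the completeness of the fuzzy-list representation. The delicate invariant to make explicit is that $ful(X)$ lists \emph{all and only} the transactions containing $X$, each carrying the correct value $fu_{jX}$ in its \textit{ifu} entry. Rather than re-deriving the arithmetic of individual \textit{ifu} values, I would state this invariant as an inductive consequence of the list-construction and join procedure and appeal to it directly, since that is precisely what guarantees the summation in $ful(X)$ loses no transaction and double-counts none.
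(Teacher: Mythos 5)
Your proposal is correct and follows essentially the same route as the paper's own proof: unwind \textit{sumIfu}($X$) into the per-transaction \textit{ifu} values, identify it with the total fuzzy utility $fu_X$, and invoke the definition of an HFUI. The only difference is that you make explicit the invariant that $ful(X)$ indexes exactly the transactions containing $X$ (which the paper silently assumes by writing the sum over $X \subseteq T_j \land T_j \subseteq ful(X)$), so your version is a slightly more careful rendering of the same definitional argument.
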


\begin{proof}
	Given a fuzzy itemset $X$ and its corresponding fuzzy-list \textit{ful}($X$), then:
	\begin{tabbing}
		$\because$ \textit{sumIfu}($X$) \=
		= $\sum_{X \subseteq T_j \land T_j \subseteq ful(X)}$\textit{ifu}($X$, $T_j$). \\
		\> = $\sum_{x_il \in X \land X \subseteq T_j \land T_j \subseteq ful(X)}$\textit{fu}($x_il$, $T_j$). \\
		$\therefore$ \textit{sumIfu}($X$) is the real fuzzy utility of $X$.
	\end{tabbing}
	Thus, if \textit{sumIfu}($X$) $\ge$ $\gamma$ that $X$ is a HFUI.
\end{proof}

\begin{theorem}
	\textbf{(The remaining fuzzy utility constrain)}
	\label{theo:sumRfu}
	\rm Given a fuzzy itemset $X$ and its fuzzy-list structure, if the sum of all the \textit{ifu} and \textit{rfu} is no less than $\gamma$, there exists some extension items of $X$ may be high fuzzy utility itemsets too. Otherwise, there is no need to reconstruct a new fuzzy-list structure and we can recall other nodes.
\end{theorem}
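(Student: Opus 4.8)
The plan is to prove the contrapositive: I would show that the quantity $\textit{sumIfu}(X) + \textit{sumRfu}(X)$ is an upper bound on the real fuzzy utility of \emph{every} extension of $X$, so that once this sum drops below $\gamma$ no extension can reach $\gamma$ and the whole subtree rooted at $X$ may be discarded. Here an extension means any fuzzy itemset $Y \supseteq X$ whose additional items all follow $X$ in the global order $\succ$; such a $Y$ can only occur in transactions that already contain $X$, so $ful(Y)$ indexes a subset of the transactions in $ful(X)$. The goal is thus the single inequality $fu_{Y} \le \textit{sumIfu}(X) + \textit{sumRfu}(X)$.

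First I would argue at the level of a single transaction $T_j$ that contains $Y$ (hence contains $X$). Writing $Y = X \cup W$ with $W \subseteq T_j / X$, I would split the transaction-level fuzzy utility $fu_{jY}(Y,T_j) = f_{jY}\bigl(\sum_{x_i \in X} q(x_i,T_j)\,eu(x_i) + \sum_{x_i \in W} q(x_i,T_j)\,eu(x_i)\bigr)$ into an ``$X$-part'' and a ``remaining-part''. Because $X \subseteq Y$, the minimum membership value satisfies $f_{jY} \le f_{jX}$, so the $X$-part is dominated by $f_{jX}\sum_{x_i \in X} q(x_i,T_j)\,eu(x_i) = fu_{jX}(X,T_j) = \textit{ifu}(X,T_j)$.

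The crux is the remaining-part. For each fuzzy item $x_i \in W$ lying in its region $l$, the minimum property again gives $f_{jY} \le f_{ijl}$, which yields the per-item bound $f_{jY}\,q(x_i,T_j)\,eu(x_i) \le f_{ijl}\,q(x_i,T_j)\,eu(x_i) = fu_{ijl} \le \textit{mfu}_{ij}$. Summing over $x_i \in W$ and then enlarging the index set to all of $T_j / X$ gives $f_{jY}\sum_{x_i \in W} q(x_i,T_j)\,eu(x_i) \le \sum_{x_i \in T_j/X}\textit{mfu}_{ij} = \textit{rfu}(X,T_j)$. Combining the two parts produces the transaction-level inequality $fu_{jY}(Y,T_j) \le \textit{ifu}(X,T_j) + \textit{rfu}(X,T_j)$.

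Finally I would sum this inequality over all transactions in $ful(Y)$ and then relax the range to all transactions in $ful(X)$, which is legitimate since every $\textit{ifu}$ and $\textit{rfu}$ term is non-negative and $ful(Y) \subseteq ful(X)$. This gives $fu_{Y} \le \textit{sumIfu}(X) + \textit{sumRfu}(X)$, the claimed bound; contrapositively, $\textit{sumIfu}(X) + \textit{sumRfu}(X) < \gamma$ forces $fu_{Y} < \gamma$ for every extension $Y$, so pruning $X$ loses no HFUI. I expect the main obstacle to be the multiplicative minimum-membership factor $f_{jY}$, which has no analogue in the purely additive HUI-Miner remaining-utility argument: the entire proof hinges on applying $f_{jY} \le f_{jX}$ to the $X$-part and the separate bound $f_{jY} \le f_{ijl}$ to each remaining item, and care is needed to ensure that the specific region chosen for each remaining item is the one under which $fu_{ijl}$ is genuinely dominated by $\textit{mfu}_{ij}$.
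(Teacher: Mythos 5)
Your proof is correct and follows essentially the same route as the paper's: a per-transaction bound $fu_{jY}(Y,T_j) \le \textit{ifu}(X,T_j) + \textit{rfu}(X,T_j)$ obtained by splitting off an $X$-part and a remaining part dominated by $\textit{mfu}_{ij}$ terms, followed by summing over the extension's \textit{tid} set and relaxing to the \textit{tid} set of $X$. If anything, you are more careful than the paper, whose proof writes several membership-related steps as equalities (e.g., splitting the fuzzy utility of the extension into that of $X$ plus that of the added items) where, since the extension's membership value is the minimum over a larger item set, only the inequalities $f_{jY} \le f_{jX}$ and $f_{jY} \le f_{ijl}$ actually hold --- precisely the point your argument handles explicitly.
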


\begin{proof}
	Assume the extension of itemset $X$ as $X^\prime$, and for $\forall T_j \supseteq X^\prime$, thus: ($X^\prime - X$) = ($X^\prime / X$).\\
	$\because$ $X \subset X^\prime \subseteq T_j \Rightarrow (X^\prime / X) \subseteq (T_j / X)$.
	\begin{tabbing}
		$\therefore$ \textit{fu}$_{jX}$($X^\prime$, $T_j$) \=
		= \textit{fu}$_{jX}$($X$, $T_j$) + \textit{fu}$_{j(X^\prime - X)}$(($X^\prime - X$), $T_j$) \\
		\> = \textit{fu}$_{jX}$($X$, $T_j$) + \textit{fu}$_{j(X^\prime - X)}$(($X^\prime / X$), $T_j$) \\
		\> = \textit{fu}$_{jX}$($X$, $T_j$) + $\sum_{x_i \in (X^\prime / X)}$\textit{fu}$_{ijl}$($x_i$, $T_j$) \\
		\> $\le$ \textit{fu}$_{jX}$($X$, $T_j$) + $\sum_{x_i \in (T_j / X)}$\textit{fu}$_{ijl}$($x_i$, $T_j$) \\
		\> =  \textit{fu}$_{jX}$($X$, $T_j$) + \textit{rfu}($X$, $T_j$). \\
	\end{tabbing}
	Suppose $X^\prime.tids$ is the \textit{tid} set in list of $X^\prime$, and $X.tids$ that in $X$, then: \\
	$\because$ $X \subset X^\prime$ $\Rightarrow$ $X^\prime.tids \subseteq X.tids$
	\begin{tabbing}
		$\therefore$ \textit{fu}$_{X^\prime}$($X^\prime$) \=
		= $\sum_{T_j \in X^\prime.tids}$\textit{fu}$_{jX^\prime}$($X^\prime$, $T_j$) \\
		\> $\le$ $\sum_{T_j \in X^\prime.tids}$(\textit{fu}$_{jX}$($X$, $T_j$) + \textit{rfu}($X$, $T_j$)) \\
		\> $\le$ $\sum_{T_j \in X.tids}$(\textit{fu}$_{jX}$($X$, $T_j$) + \textit{rfu}($X$, $T_j$)) \\
		\> = \textit{sumIfu}($X$) + \textit{sumRfu}($X$).
	\end{tabbing}
	Therefore, if \textit{sumIfu}($X$) + \textit{sumRfu}($X$) $\le$ $\gamma$ that $X^\prime$ must be a low fuzzy utility itemset.
\end{proof}

For example, consider the 2-itemset \{\textit{B,Middle}, \textit{A.Low}\} is the extension of 1-itemset \{\textit{A.Low}\}. Since \textit{sumIfu}(\{\textit{A.Low}\}) + \textit{sumRfu}(\{\textit{A.Low}\}) = \$96.8 ($>$ $\gamma$), the extension \{\textit{B,Middle}, \textit{A.Low}\} in the enumeration tree is necessary to be generated. However, as for the 2-itemset \{\textit{B,Middle}, \textit{A.Low}\}, the summation of its \textit{ifu} and \textit{rfu} is lower than $\gamma$. Thus, the extension of 2-itemset is no need to search deeply. So far, we have introduced all key definitions and useful theorems, the detailed description of the FUIM algorithm proposed in this paper will be mentioned in the following.

Inspired by Ref. \cite{krishnamoorthy2015pruning}, if we study the remaining fuzzy utility constraint furthermore, there would be a lot of useless fuzzy utility values are computed. We continue consider the 2-itemset $\{$\textit{B,Middle}, \textit{A.Low}$\}$ as sample. In fuzzy-list of $\{$\textit{A.Low}$\}$, it contains five transactions ($T_2$, $T_4$, $T_5$, $T_6$ and $T_7$), and the extended fuzzy-list of $\{$\textit{B,Middle}$\}$ includes seven transactions ($T_1$, $T_2$, $T_3$, $T_4$, $T_6$, $T_7$ and $T_8$). Then the common transactions are $T_2$, $T_4$, $T_5$, $T_6$ and $T_7$, and we only need to take these transactions into account. Thus, $T_2$ and $T_5$ are two useless transactions when computing fuzzy values of $\{$\textit{B,Middle}, \textit{A.Low}$\}$. Herein, we propose a new constraint which is a tighter upper-bound than remaining fuzzy utility.

\begin{theorem}
	\textbf{(The expended fuzzy utility constrain)}
	\label{theo:sumExU}
	\rm Given two fuzzy itemsets $X$ and $Y$, if $\sum_{T_j \in ful(X)}$(\textit{ifu}($X$, $T_j$) + \textit{rfu}($X$, $T_j$)) - $\sum_{T_j \in ful(X) \land Y \not\subseteq T_j}$(\textit{ifu}($X$, $T_j$) + \textit{rfu}($X$, $T_j$)) $<$ $\gamma$, then any extension itemset based on $XY$ cannot be a HFUI.
\end{theorem}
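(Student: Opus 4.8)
The plan is to treat Theorem~\ref{theo:sumExU} as a $Y$-aware sharpening of the remaining-fuzzy-utility bound (Theorem~\ref{theo:sumRfu}): whereas Theorem~\ref{theo:sumRfu} bounds the utility of an arbitrary extension of $X$ by summing over all of $ful(X)$, here we exploit the extra knowledge that the sought extension must \emph{also} contain $Y$, so only transactions holding both $X$ and $Y$ can ever contribute. First I would dispose of the algebraic bookkeeping. Because $ful(X)$ splits disjointly into the transactions with $Y \subseteq T_j$ and those with $Y \not\subseteq T_j$, the left-hand side of the hypothesis collapses to $\sum_{T_j \in ful(X) \land Y \subseteq T_j}(\textit{ifu}(X, T_j) + \textit{rfu}(X, T_j))$. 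Hence it suffices to prove that this restricted sum is an upper bound on $fu_W$ for every fuzzy itemset $W$ with $XY \subseteq W$.

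Next I would recycle the per-transaction inequality already established inside the proof of Theorem~\ref{theo:sumRfu}. For any extension $W \supseteq X$ and any $T_j \supseteq W$, the same minimum-operator argument gives $\textit{fu}_{jW}(W, T_j) \le \textit{fu}_{jX}(X, T_j) + \textit{rfu}(X, T_j) = \textit{ifu}(X, T_j) + \textit{rfu}(X, T_j)$, since the items of $W / X$ lie in $T_j / X$ and the merged membership value can only shrink. The crucial new observation is about the transaction set: because $W$ contains both $X$ and $Y$, every transaction supporting $W$ contains $Y$, so $W.tids \subseteq \{\, T_j \in ful(X) : Y \subseteq T_j \,\}$. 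Summing the per-transaction bound over $W.tids$ and then enlarging the index set to all of $\{T_j \in ful(X): Y \subseteq T_j\}$ yields $fu_W \le \sum_{T_j \in ful(X) \land Y \subseteq T_j}(\textit{ifu}(X, T_j) + \textit{rfu}(X, T_j))$, which is exactly the simplified hypothesis. Therefore, whenever that quantity is below $\gamma$, we obtain $fu_W < \gamma$ for every extension $W$ of $XY$, i.e. no such $W$ is a HFUI.

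I expect the only delicate point to be the transaction-set inclusion $W.tids \subseteq \{T_j \in ful(X): Y \subseteq T_j\}$ and making the two index-set restrictions interact correctly: I must argue simultaneously that $X \subseteq W$ forces $W.tids \subseteq ful(X)$ and that $Y \subseteq W$ forces every supporting transaction to satisfy $Y \subseteq T_j$, so that discarding the $Y \not\subseteq T_j$ transactions is legitimate rather than merely heuristic. The membership-level inequality is inherited verbatim from Theorem~\ref{theo:sumRfu} and needs no fresh work; the entire content of the improvement lies in correctly narrowing the summation range, which is precisely what makes this bound tighter than \textit{sumIfu}($X$) + \textit{sumRfu}($X$).
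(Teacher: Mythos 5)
Your proof is correct, and it takes a recognizably different route from the paper's. Both arguments begin with the same disjoint split of $ful(X)$ into transactions that do and do not contain $Y$, and both reuse the per-transaction inequality \textit{fu}$_{jW}$($W$, $T_j$) $\le$ \textit{ifu}($X$, $T_j$) + \textit{rfu}($X$, $T_j$) from inside the proof of Theorem \ref{theo:sumRfu}. The divergence is what is done with the restricted sum: the paper pivots through the fuzzy-list of $XY$, asserting $\sum_{T_j \in ful(X) \land Y \subseteq T_j}$(\textit{ifu}($X$, $T_j$) + \textit{rfu}($X$, $T_j$)) $=$ $\sum_{T_i \in ful(XY)}$(\textit{ifu}($XY$, $T_i$) + \textit{rfu}($XY$, $T_i$)) and then applying the Theorem \ref{theo:sumRfu} argument to $XY$, whereas you never form $ful(XY)$ at all and instead bound the utility of every $W \supseteq XY$ directly in terms of the $X$-based quantities, via the tid-set inclusion $W.tids \subseteq \{T_j \in ful(X) : Y \subseteq T_j\}$. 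Your route is in fact the more rigorous one: the paper's pivot ``equality'' is not an equality --- per transaction one only has \textit{ifu}($XY$, $T_j$) + \textit{rfu}($XY$, $T_j$) $\le$ \textit{ifu}($X$, $T_j$) + \textit{rfu}($X$, $T_j$), since the minimum operator can shrink the membership value of $XY$ below that of $X$ and \textit{mfu} only dominates the true region utilities --- though the inequality runs in the direction the paper needs, so its conclusion survives. What the paper's factoring buys is the reading that the hypothesis of Theorem \ref{theo:sumExU} dominates what \textit{sumIfu}($XY$) + \textit{sumRfu}($XY$) would be if the list were actually built, i.e., the theorem is Theorem \ref{theo:sumRfu} applied to $XY$ without paying for the construction; what your version buys is a self-contained argument with one fewer (and one false-as-stated) intermediate claim. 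One small point worth making explicit in your write-up: enlarging the summation range from $W.tids$ to $\{T_j \in ful(X) : Y \subseteq T_j\}$ tacitly uses that every term \textit{ifu}($X$, $T_j$) + \textit{rfu}($X$, $T_j$) is non-negative, which holds here but should be stated.
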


\begin{proof}
	Assume the extension of fuzzy itemsets $X$ and $Y$ are $X^\prime$ and $Y^\prime$ respectively, and the extension of their super-itemset $XY$ is $X^\prime$$Y^\prime$.
	\begin{tabbing}
		$\because$ $\sum_{T_j \in ful(X)}$(\textit{ifu}($X$, $T_j$) + \textit{rfu}($X$, $T_j$)) \= \\
		= $\sum_{T_j \in ful(X) \land Y \subseteq T_j}$(\textit{ifu}($X$, $T_j$) + \textit{rfu}($X$, $T_j$)) \\
		+ $\sum_{T_j \in ful(X) \land Y \not\subseteq T_j}$(\textit{ifu}($X$, $T_j$) + \textit{rfu}($X$, $T_j$)) \\
		$\therefore$ $\sum_{T_j \in ful(X) \land Y \subseteq T_j}$(\textit{ifu}($X$, $T_j$) \= \\
		= $\sum_{T_j \in ful(X)}$(\textit{ifu}($X$, $T_j$) + \textit{rfu}($X$, $T_j$)) \\ - $\sum_{T_j \in ful(X) \land Y \not\subseteq T_j}$(\textit{ifu}($X$, $T_j$) + \textit{rfu}($X$, $T_j$)) \\
		= $\sum_{T_i \in ful(XY)}$(\textit{ifu}($XY$, $T_j$) + \textit{rfu}($XY$, $T_j$)).
	\end{tabbing}
	\begin{tabbing}
		$\because$ $T_k \subseteq T_i \subseteq T_j$ and $X^\prime$$Y^\prime$ is the extension of $XY$ \\
		$\therefore$ \textit{fu}($X^\prime$$Y^\prime$) \= =$\sum_{T_k \in ful(X^\prime Y^\prime)}$\textit{fu}$_{kX^\prime Y^\prime}$($X^\prime$$Y^\prime$, $T_k$) \\
		\> $\le$ $\sum_{T_i \in ful(XY)}$(\textit{ifu}($XY$, $T_i$) + \textit{rfu}($XY$, $T_i$)) \\
		\> = $\sum_{T_j \in ful(X)}$(\textit{ifu}($X$, $T_j$) + \textit{rfu}($X$, $T_j$)) \\
		\> - $\sum_{T_j \in ful(X) \land Y \not\subseteq T_j}$(\textit{ifu}($X$, $T_j$) + \textit{rfu}($X$, $T_j$)).
	\end{tabbing}
	Thus, Theorem \ref{theo:sumExU} is proofed.
\end{proof}

\subsection{Main FUIM Algorithm}

According to the aforementioned definitions, we had discussed the details of proposed algorithm. The whole framework of the FUIM algorithm is shown in Fig. \ref{fig:flowchart}. It takes quantitative transactions, external utility of items membership function and user-specified threshold as input, then traditional items are transformed into fuzzy items. Next, the key information of updated quantitative transactions will be extracted and forms efficient fuzzy-lists. After a series mining steps, a complete set of high fuzzy utility itemsets is discovered. We introduce the details of mining process as follows.

%%%%%%%%%%%   flowchart of FUIM   %%%%%%%%%%%
\begin{figure}[hbtp]
	\centering
	\includegraphics[width=0.49\textwidth,scale=0.6]{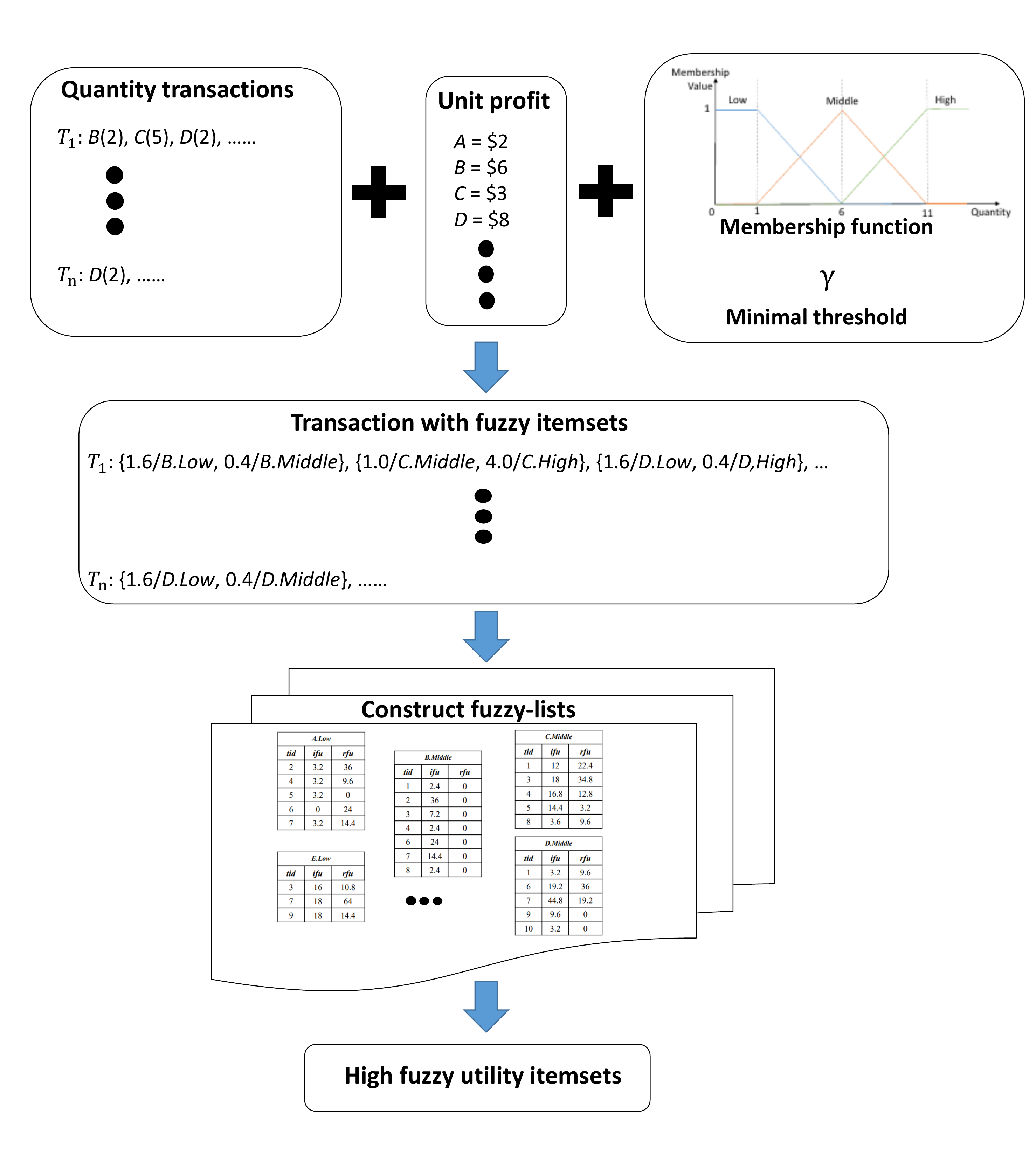}
	\caption{The framework of FUIM.}
	\label{fig:flowchart}
\end{figure}
%%%%%%%%%%%   flowchart of FUIM   %%%%%%%%%%%

We firstly introduce the main algorithm, in the pseudo-code of \textbf{Algorithm} \ref{algo:fuim_algorithm}, the input parameters include: 1) a quantitative transaction database $\mathcal{D}$, 2) a membership function $R$, and 3) a minimum fuzzy utility threshold $\gamma$. The output is a complete set of high fuzzy utility itemsets (HFUIs) in $\mathcal{D}$. In Lines 1-5, FUIM retrieves each transaction $T_j$ in $\mathcal{D}$ to collect key information about fuzzy utility upper-bound values of all items $x_i$ (\textit{fuub}($x_i$)). Then it obtains a global order of all $x_i \in I$ and revises items of each $T_j$ in $\mathcal{D}$ with the fuzzy utility upper-bound ascending order (Line 6). After that, common prefix itemset and its fuzzy-list are set as \textit{NULL}, which help to construct high level fuzzy itemsets (Lines 7 and 8). In Lines 9-12, it traverses all $x_i$. According Theorem \ref{theo:upperbound}, if the fuzzy utility upper-bound is no less than $\gamma$, that it is unnecessary to construct fuzzy-list; otherwise, it constructs fuzzy-list of HFUUBIs (\textit{fuList}$_1$). Then, in Line 14 this algorithm sets the selected \textit{fuList}$_1$ and \textit{prefix} as input parameters into \textbf{Algorithm} \ref{algo:miner_algorithm}. Finally, a complete set of HFUIs will be output (Line 15).

%%%%%%%%%%%%%%%%%   The main function   %%%%%%%%%%%%%%%%%%%
\begin{algorithm}[h]
	\caption{The FUIM algorithm}
	\label{algo:fuim_algorithm}
	\LinesNumbered
	\KwIn{$\mathcal{D}$: a quantitative transaction database; $R$: a membership function; $\gamma$: a user-specified minimum fuzzy utility threshold.}
	\KwOut{a complete set of fuzzy high utility itemsets (HFUIs).}
	
	\For{each transaction $T_j$ in $\mathcal{D}$} {
		convert the utility of all items $x_i \in I$ in $T_j$ to fuzzy utility value by $R$\;
		compute the maximum transaction fuzzy utility of $T_j$ (\textit{mtfu}$_j$)\;
		calculate the fuzzy utility upper-bound value of $x_i$ (\textit{fuub}($x_i$))\;
	}
	
	sort all $x_i \in I$ with the fuzzy utility upper-bound ascending order then get revised transactions\;
	initial common prefix itemset $P$ $\leftarrow$ \textit{NULL}\;
	initial fuzzy-list of $P$, \textit{fuList}$_{p}$ $\leftarrow$ \textit{NULL}\;
	
	\For{each item $x_i$ in $I$} {
		\If{\textit{fuub}($x_i$) $\ge$ $\gamma$} {
			get fuzzy-list of $x_i$ (\textit{fuList}$_1$)\;
		}
	}
	
	call \textbf{Miner}($P$, \textit{fuList}$_{p}$, \textit{fuList}$_1$, $\gamma$)\;
	
	\textbf{return} a complete set of HFUIs
\end{algorithm}
%%%%%%%%%%%%%%%%%    The main function   %%%%%%%%%%%%%%%%%%%

The details of the improved algorithm are shown in \textbf{Algorithm} \ref{algo:miner_algorithm}, and it is also an iteration method. It takes four parameters (a common prefix fuzzy itemset, the fuzzy-list of prefix, a set of fuzzy-lists, and a threshold) to discover all HFUIs constantly. For each \textit{fuList}$_X$ $\in$ \textit{fuLists}, it firstly checks whether $X$ is a HFUI or not (Lines 2-4). Then, according to Theorem \ref{theo:sumRfu}, if the summation of \textit{ifu} and \textit{rfu} of $X$ is no less than $\gamma$, it expends $X$ to construct more high level fuzzy itemsets (Lines 5-14). In Line 6, it uses \textit{exfuLists} to collect new extension fuzzy-lists, which is \textit{NULL} at the beginning. Since all 1-itemset are sorted with the ascending order of fuzzy utility upper-bounds, it just needs to consider these lists \textit{fuList}$_Y$ after \textit{fuList}$_X$ (Line 7). According to Theorem \ref{theo:sumExU}, if its inequality is false, and then it will call \textbf{Algorithm} \ref{algo:construct_lsit} to get new high level \textit{fuList} (Lines 8-13). At last, this procedure sets new parameters and iterates \textbf{Algorithm} \ref{algo:miner_algorithm} until there is no HFUIs are found (Lines 15 and 16).

%%%%%%%%%%%%%%%%%   The FUIM algorithm   %%%%%%%%%%%%%%%%%%%
\begin{algorithm}[h]
	\caption{The Miner function}
	\label{algo:miner_algorithm}
	\LinesNumbered
	\KwIn{$P$: a common prefix fuzzy itemset; \textit{fuList}$_P$: the fuzzy-list of $P$; \textit{fuList}s: a set of fuzzy-lists; $\gamma$: a user-specified minimum fuzzy utility threshold.}
	
	\For{each \textit{fuList}$_X$ in \textit{fuList}s} {
		\If{\textit{sumIfu}($X$) $\ge$ $\gamma$} {
			HFUIs $\leftarrow$ $X$\;
		}
		
		\If{\textit{sumIfu}($X$) + \textit{sumRfu}($X$) $\ge$ $\gamma$} {
			initial \textit{exfuList}s which is a new set of extended fuzzy-lists as \textit{NULL}\;
			
			\For{each \textit{fuList}$_Y$ after \textit{fuList}($X$) in \textit{fuList}s} {
				\If{\textit{sumIfu}($X$$Y$) + \textit{sumRfu}($X$$Y$) $\ge$ $\gamma$} {
					new \textit{fuList}$_{tmp}$ $\leftarrow$ call \textbf{Construct}(\textit{fuList}$_P$, \textit{fuList}$_X$, \textit{fuList}$_Y$)\;
					
					\If{\textit{sumIfu}(\textit{fuList}$_{tmp}$) $>$ 0} {
						add \textit{fuList}$_{tmp}$ into \textit{exfuList}s\;
					}
				}
			}
			
			$P$ $\leftarrow$ $P$ $\cup$ $X$\;
			call \textbf{Miner}($P$, \textit{fuList}$_X$, \textit{exfuList}s, $\gamma$)\;
		}
	}
\end{algorithm}
%%%%%%%%%%%%%%%%%   The FUIM algorithm   %%%%%%%%%%%%%%%%%%%

\textbf{Algorithm} \ref{algo:construct_lsit} takes three fuzzy-lists as input parameters. It combines two distinct lists to construct a high level list. In fact, as shown in Fig. \ref{fig:enumeratetree}, it joins two different lists if only if they have a common prefix. For the lists of 1-itemsets, their common prefix is assumed as \textit{NULL}. Line 1 initializes the fuzzy-list of new extension fuzzy itemset \textit{Pxy}. Then it traverses each element $Px_e$ in fuzzy-list of $Px$ (Line 2). If there exists an element $Py_e$ in \textit{fuList}$_{Py}$ and has the same \textit{tid} term of $Px_e$, two fuzzy itemsets $Px$ and $Py$ can form a new fuzzy itemset \textit{Pxy} (Line 5 and 6). Furthermore, if $Px$ and $Py$ is 1-itemset, it constructs a new element of \textit{Pxy} fuzzy-list, then calculates the fuzzy internal utility and remaining fuzzy utility of \textit{Pxy} (Line 8). On the contrary, the \textit{ifu} values should minus \textit{ifu}($P_e$) because of computing repeatedly (Line 6). Due to the quantitative transaction is already revised, the \textit{rfu} value of \textit{Pxy} should be \textit{rfu}($P_y$). In Line 10, a new element of fuzzy-list of \textit{Pxy} is constructed. In the end, \textbf{Algorithm} \ref{algo:construct_lsit} returns a new \textit{fuList}(\textit{Pxy}) (Line 13).

%%%%%%%%%%%%%%%%%   The fuList construct procedure   %%%%%%%%%%%%%%%%%%%
\begin{algorithm}
	\caption{The Construct function}
	\label{algo:construct_lsit}
	\LinesNumbered
	\KwIn{\textit{fuList}$_{P}$: the fuzzy-list of co-prefix fuzzy itemset; \textit{fuLsit}$_{Px}$: the fuzzy-list of fuzzy itemset $Px$; \textit{fuLsit}$_{Py}$: the fuzzy-list of fuzzy itemset $Py$.}
	\KwOut{a high level fuzzy-list of \textit{fuList}$_{Pxy}$.}
	
	initial \textit{fuList}$_{Pxy}$ $\leftarrow$ \textit{NULL}\;
	
	\For{each element $Px_e$ $\in$ \textit{fuList}$_{Px}$} {
		\If{$\exists Py_e$ $\in$ \textit{fuList}$_{Py}$ and $Px_e$\textit{.tid} == $Py_e$\textit{.tid}} {
			\eIf{\textit{fuList}$_{P}$ $\not=$ \textit{NULL}} {
				adopt binary search method find element $P_e$ $\in$ \textit{fuList}$_{P}$ which  $P_e$\textit{.tid} == $Px_e$\textit{.tid}\;
				$Pxy_e$ = ($Px_e$\textit{.tid}, \textit{ifu}($Pxy_e$)-\textit{ifu}($P_{e}$), \textit{rfu}($Py_e$))\;
			} {
				$Pxy_e$ = ($Px_e$\textit{.tid}, \textit{ifu}($Pxy_e$), \textit{rfu}($Py_e$))\;
			}
			add $Pxy_e$ into \textit{fuList}$Pxy$\;
		}
	}
	
	\textbf{return} a fuzzy-list of new fuzzy itemset $Pxy$
\end{algorithm}
%%%%%%%%%%%%%%%%%   The fuList construct procedure   %%%%%%%%%%%%%%%%%%%

%	%%%%%%%%%%%%%%%%%%%%%%%%%%%%%%%%%%%%%%%%%%%
%	%%%%%%%%%%%%%%%%%%%% EXPERIMENT %%%%%%%%%%%%%%%
%	%%%%%%%%%%%%%%%%%%%%%%%%%%%%%%%%%%%%%%%%%%%

\section{Experimental Results}
\label{sec:experimental}

In this section, we present extensive experiments to evaluate the effectiveness and efficiency of the proposed FUIM algorithm. To the best of our knowledge, the state-of-the-art study that most related to FUIM is TPFU \cite{lan2015fuzzy}. Since the fuzzy-based HUIM algorithm is much time costly, we select TPFU as a benchmark to evaluate the performance of FUIM, as well as the improved versions which adopted various theorems. In the following, FUIM$_1$, FUIM$_2$, FUIM$_3$, and FUIM respectively refer to different variant of the FUIM algorithm, with only Theorems \ref{theo:fuzzyDC} and \ref{theo:upperbound}, with Theorems \ref{theo:upperbound} and \ref{theo:sumRfu}, with Theorems \ref{theo:upperbound}, and \ref{theo:sumExU}, and with all theorems introduced. In addition, all algorithms have utilized Theorems \ref{theo:fuzzyDC} and \ref{theo:sumIfu}.

\subsection{Experimental Setup and Datasets}

We implemented all algorithms with Java language and conducted the experiments on a computer with an Intel Core 3.0 GHz processor with 16 GB of main memory running on Windows 10 Home Edition (64-bit operating system). To compare the performance of FUIM with TPFU, both real-life and synthetic datasets were used in experiments.

\textbf{Dataset description}. Six different datasets were used to test the efficiency of our novel algorithm, including four real datasets (foodmart, kosarak, mushroom, and retail) and two synthetic datasets (T10I4D100K and T40I10D100k). All the datasets have various characteristics and can be downloaded from the SPMF library\footnote{\url{http://www.philippe-fournier-viger.com/spmf/index.php}}. Then, we randomly generated unit utility of each item between 1 and 10,000 by a log-normal distribution method. At the same time, the quantity of each item was randomly generated from a pre-defined range (1-6) for each dataset (the last column in Table \ref{tab:experiments}). The statistical information of each dataset is given in Table \ref{tab:experiments}, including the number of transactions, the account of distinct items, the average and maximal length of a transaction, and the quantity range of item.

%%%%%%%%%%   experiment datasets   %%%%%%%%%%
\begin{table}[!h]
	\begin{center}
		\caption{Statistical information about datasets}
		\label{tab:experiments}
		\begin{tabular}{lccccc}
			\hline
			\textbf{Dataset} & \#\textbf{Trans} & \#\textbf{Items} & \textbf{AvgLen} & \textbf{MaxLen} & \textbf{Range} \\ \hline 
			foodmart & 4141 & 1559 & 4.4 & 14 & 1-6 \\ 
			kosarak & 990002 & 41280 & 8.1 & 2498 & 1-6 \\ 
			mushroom & 8124 & 119 & 23 & 23 & 1-6 \\ 
			retail & 88162 & 16470 & 10.3 & 76 & 1-6 \\ 
			T10I4D100K & 10000 & 870 & 10.1 & 29 & 1-6 \\ 
			T40I10D100k & 10000 & 942 & 39.6 & 77 & 1-6 \\
			\hline
		\end{tabular}
	\end{center}
\end{table}
%%%%%%%%%%   experiment datasets   %%%%%%%%%%

\textbf{Membership function}. As shown in Fig. \ref{fig:exMembership}, we supposed all items had the same membership function in the experiments. And there are three fuzzy regions (\textit{Low}, \textit{Middle}, and \textit{High}).

%%%%%%%%%%   Experiment Membership function  %%%%%%%%%%
\begin{figure}[hbt]
	\centering
	\includegraphics[scale=0.5]{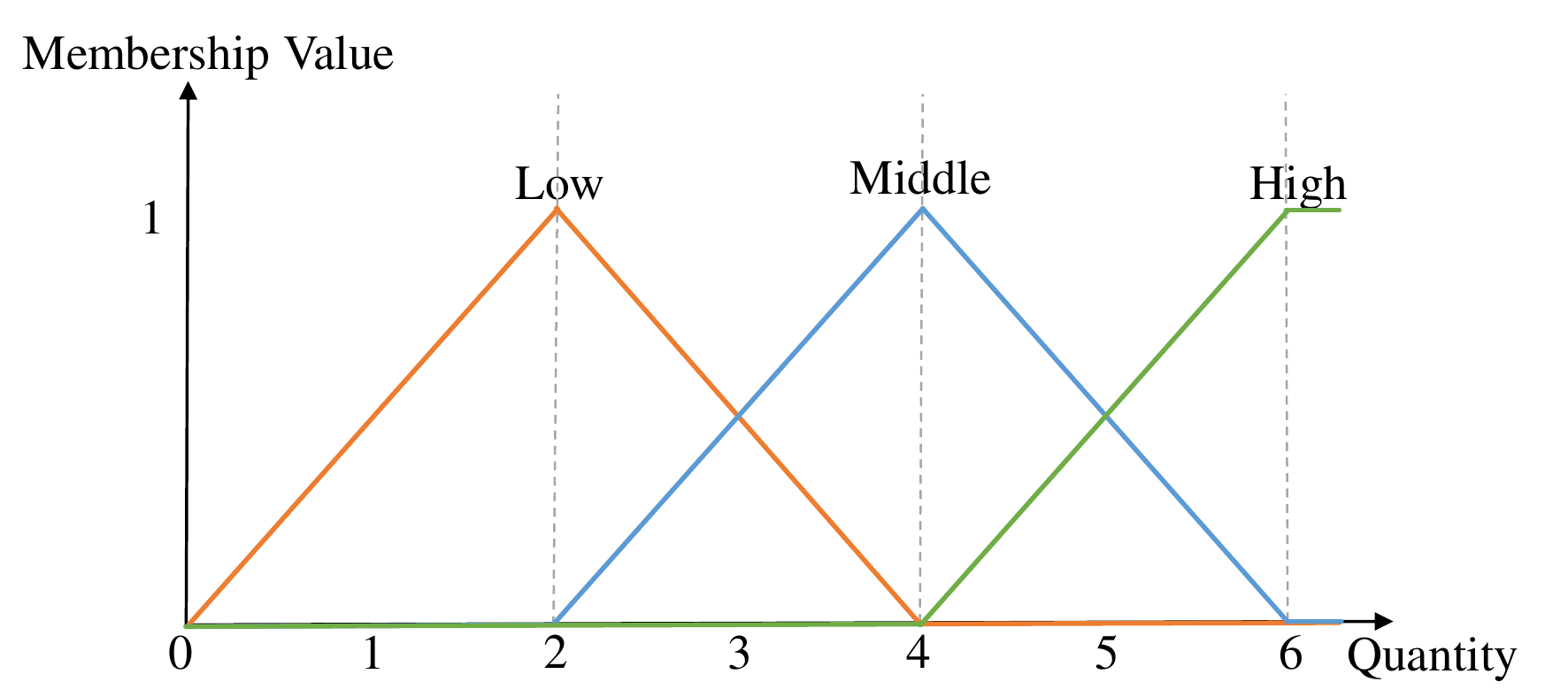}
	\caption{The membership function used in experiments.}
	\label{fig:exMembership}
\end{figure}
%%%%%%%%%%   Experiment Membership function  %%%%%%%%%%

Note that three times of the experiments were executed for each test, and calculated their average runtime and memory consumption as final results. A wide range or the minimum fuzzy utility threshold ($\gamma$) was variously set for different datasets until a clear trend between the compared algorithms is revealed. In order to make results without loss of generality, for each dataset, we adopt threshold rate multiplies the total utility of whole dataset. The $\gamma$ is used to represent the threshold rate in the following content. In the designed process, we assume the algorithm is terminated if its runtime exceeds 10,000 seconds.

\subsection{Patterns and Candidates Analysis}

In order to analyze the relationship between fuzzy-based HUIM and traditional HUIM, the numbers of two types of discovered patterns (HFUIs and HUIs) are compared with the same $\gamma$. Notice that HFUIs are generated by TPFU (the state-of-the-art algorithm) and FUIM, and the HUIs are derived by the HUI-Miner algorithm. In Fig. \ref{fig:compare_HFUIs_with_HUIs}, we had tested foodmart dataset with various thresholds. Even though fuzzy-list may be produced much because of huge items, in fact, the membership function has limited the number of high level itemsets. Therefore, the amount of HUIs has a large majority over HFUIs. For example, the number of HFUIs is nearly half of HUIs when $\gamma$ = 0.001. Absolutely,  the smaller threshold we set, the greater gap will be. Then we can also safely infer the situation will be same appeared on other datasets.

%%%%%%%%%%   FFUIs Vs. HUIs  %%%%%%%%%%
\begin{figure}[hbt]
	\centering
	\includegraphics[scale=0.35, width=\linewidth]{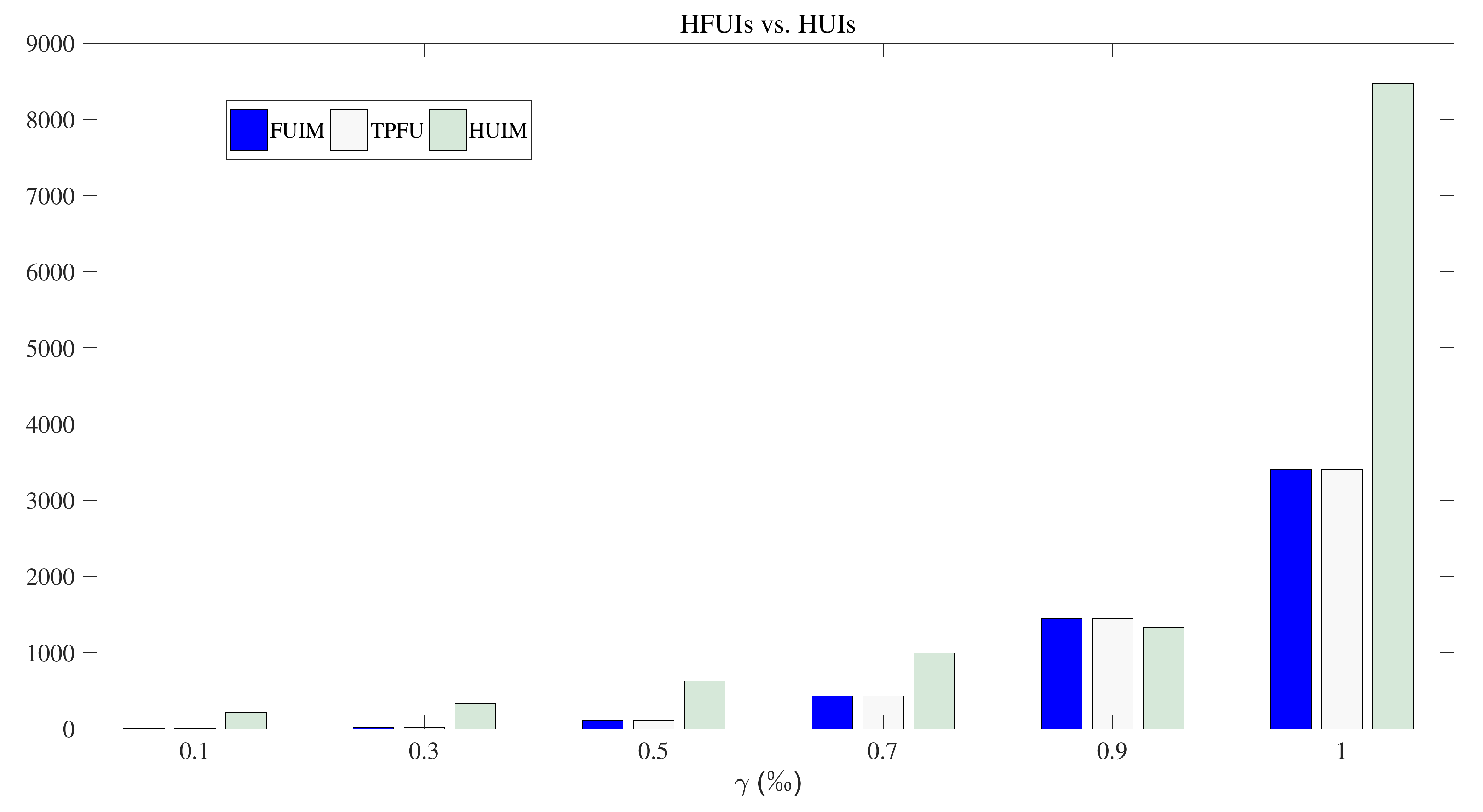}
	\caption{The comparison between HFUIs and HUIs.}
	\label{fig:compare_HFUIs_with_HUIs}
\end{figure}
%%%%%%%%%%   HFUIs Vs. HUIs  %%%%%%%%%%

%%%%%%%%%%%%%%%%%%%  Candidates generate  %%%%%%%%%%%%%%%%%%%%%%%
\begin{table*}[htbp]
	\normalsize
	\centering
	\caption{Compare the number of candidates generating}
	\label{tab:compare_candidates}
	
	%\scalebox{0.5}{
	\begin{tabular}{|c|c|llllll|}
		\hline\hline
		\textbf{Dataset} & \textbf{Algorithm} & \multicolumn{6}{c|}{\textbf{\# Candidates when varying $\gamma$}} \\ \hline
		
		&$\gamma$	   & 1\textperthousand & 2\textperthousand & 3\textperthousand & 4\textperthousand & 5\textperthousand & - \\ \cline{2-8}
		&\textbf{TPFU} 		    & 207,042,480 & 98,615,784 & 1,796,415 & 173,612 & 173,612 & -\\
		\textbf{foodmart} &\textbf{FUIM}  		  & 5,725,017 & 1,218 & 119 & 0 & 0 & -\\
		&\textbf{FUIM$_1$}  & 31,672,791 & 18,387,668  & 83,480 & 0 & 0 & -\\
		&\textbf{FUIM$_2$}  & 6,858,907 & 5,725,017 & 1218 & 0 & 0 &- \\
		&\textbf{FUIM$_3$}  & 10,912,794 & 6,675,877 & 63,308 & 0 & 0 &- \\ \hline
		
		& $\gamma$	   & 1.2\% & 1.4\% & 1.6\% & 1.8\% & 2\% & 2.2\% \\ \cline{2-8}
		&\textbf{TPFU} 		    & - & - & - & - & - & - \\
		\textbf{mushroom} &\textbf{FUIM}  		   & 176,163 & 124,141 & 91,371 & 69,615 & 52,942 & 40,722 \\
		&\textbf{FUIM$_{1}$}  & - & - & - & - & - & -  \\
		&\textbf{FUIM$_{2}$}  & 268,150 & 181,638 & 132,503 & 101,968 & 77,833 & 59,858  \\
		&\textbf{FUIM$_{3}$}  & 330,511 & 232,521 & 169,156 & 130,923 & 105,155 & 87,068 \\ \hline
		
		& $\gamma$	    & 2\textperthousand & 2.5\textperthousand & 3\textperthousand & 3.5\textperthousand & 4\textperthousand & 4.5\textperthousand \\ \cline{2-8}
		&\textbf{TPFU} 		   & - & - & - & - & - & - \\
		\textbf{kosarak} &\textbf{FUIM}  		  & 43,288,292 & 32,926,784 & 22,022,584 & 10,830,174 & 4,531,538 & 2,192,118 \\
		&\textbf{FUIM$_{1}$}  & - & - & - & - & - & -  \\
		&\textbf{FUIM$_{2}$}  & 64,646,254 & 33,593,460 & 22,187,694 & 10,854,457 & 4,531,624 & 2,192, 187  \\
		&\textbf{FUIM$_{3}$}  & 114,461,240 & 91,518,312 & 79,905,639 & 71,772,428 & 64,727,735 & 58,696,947 \\ \hline
		
		& $\gamma$	   & 0.8\textperthousand & 1\textperthousand & 1.2\textperthousand & 1.4\textperthousand & 1.6\textperthousand & 1.8\textperthousand \\ \cline{2-8}
		&\textbf{TPFU} 		   & - & - & - & - & - & - \\
		\textbf{T10I4D100K} &\textbf{FUIM}  		  & 2,179,851 & 1,974,036 & 1,788,221 & 1,628,388 & 1,421,094 & 1,242,777 \\
		&\textbf{FUIM$_{1}$}  & - & - & - & - & - & -  \\
		&\textbf{FUIM$_{2}$}  & 2,722,923 & 2,205,591 & 1,883,173 & 1,664,747 & 1,432,227 & 1,247,622  \\
		&\textbf{FUIM$_{3}$}  & 2,868,034 & 2,712,449 & 2,653,794 & 2,570,223 & 2,488,714 & 2,408,797 \\ \hline
		
		\hline\hline
	\end{tabular}
	%}
\end{table*}
%%%%%%%%%%%%%%%%%%%  Candidates generate  %%%%%%%%%%%%%%%%%%%%%%%

Furthermore, the candidates generated in TPFU and the visited nodes in FUIM are also compared to access the pruning effect. Thus, the detailed numbers of all tested algorithms are shown in Table \ref{tab:compare_candidates}. FUIM adopts the compact fuzzy-list data structure and effective pruning strategies that can reduce the searching space as much as possible, while TPFU is still a level-wise algorithm. Overall, it can declare FUIM is much faster than TPFU apparently, and TPFU often runs out of time in many experiments. More intuitively, when the $\gamma$ increases, the number of candidates is decreasing instead. The reason is that the pruning strategies may be easy to filter out more candidates with a larger threshold. For example, the visited nodes are ascending from 40,722 to 176,163 when $\gamma$ reduces from 2.2\% to 1.2\% on mushroom dataset. In addition, because FUUB is a loose upper bound, FUIM$_1$ is also often overtime in experiments. It should be figure out that, on foodmart dataset, the abnormal results are caused by overlarge threshold when $\gamma$ is 4\textperthousand $\,$  (= \$417,817) and 5\textperthousand $\,$  (= \$522,272). In a word, FUIM performs the best among the five tested algorithms, and it generates the least amount of candidates.

\subsection{Memory Usage Analysis}

Whether the fuzzy-list-based model is less memory cost than the two-phase Apriori-based model? To answer this question, we further carried experiments to assess the memory consumption of the compared algorithms using the same parameters. The details of the maximum memory usage for each algorithm are shown in Fig. \ref{fig:compare_memory}. On foodmart dataset, the average memory usage of TPFU is nearly eight times of that of FUIM, FUIM$_1$, FUIM$_2$, and FUIM$_3$. In particularly, due to the poor efficiency of TPFU, it often runs out of time that we assume it cannot get result on other datasets. Similarly, FUIM$_1$ faces the same troubles. In addition, the memory consumption of FUIM and FUIM$_{3}$ increase smoothly as $\gamma$ decrease. In fact, several effective pruning strategies are utilized in FUIM, and a compact data structure can store complete necessary message. This explains why FUIM always costs less memory then TPFU. In conclusion, FUIM performs very well as expected in experiments.

%%%%%%%%%%   Memory consumption  %%%%%%%%%%
\begin{figure*}[!htbp]
	\centering
	\includegraphics[trim=10 0 10 0,clip,height=0.17\textheight,width=1\textwidth,scale=0.6]{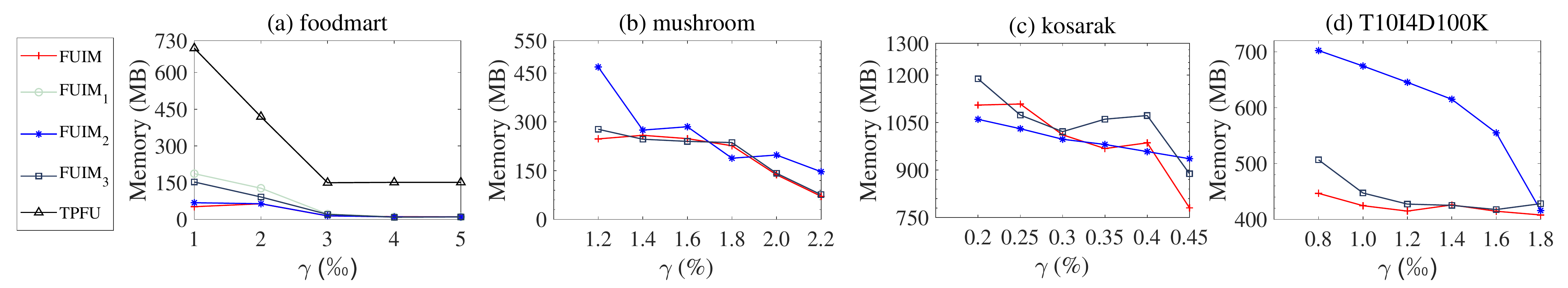}
	\caption{The memory consumption.}
	\label{fig:compare_memory}
\end{figure*}
%%%%%%%%%%   Memory consumption  %%%%%%%%%%

%%%%%%%%%%   Runtime consumption  %%%%%%%%%%
\begin{figure*}[!htbp]
	\centering
	\includegraphics[trim=10 0 10 0,clip,height=0.17\textheight,width=1\textwidth,scale=0.6]{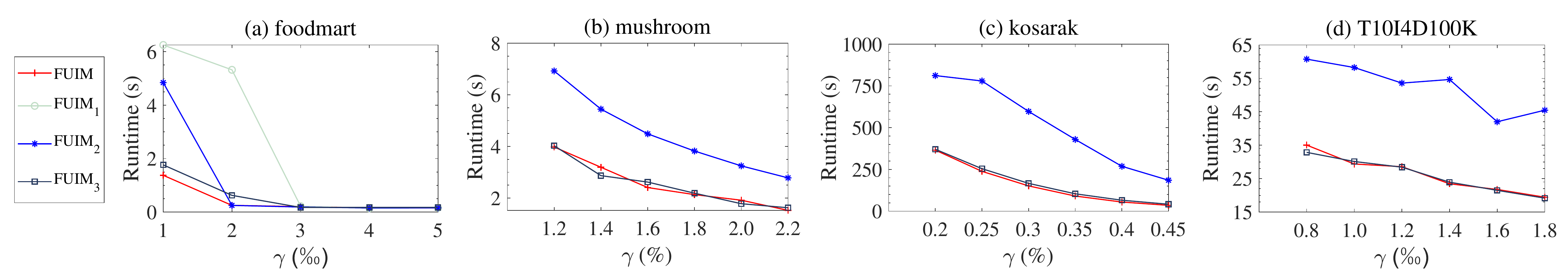}
	\caption{The runtime consumption.}
	\label{fig:compare_runtime}
\end{figure*}
%%%%%%%%%%   Runtime consumption  %%%%%%%%%%

\subsection{Runtime Analysis}

In this subsection, we compare the runtime of four versions of the proposed FUIM algorithm (FUIM$_1$, FUIM$_2$, UFIM$_3$, and FUIM) with TPFU. However, on the account of TPFU took so much time that makes an enormous gap with FUIM, we do not draw the line of TPFU. This is the common shortcoming of level-wise algorithms, whereby TPFU generated a large amount of useless candidates. Fig. \ref{fig:compare_runtime} shows the runtime consumption of four compared algorithms under various $\gamma$ on different datasets. Absolutely, the proposed FUIM algorithm significantly outperformed others in all cases. Because of the same reason, FUIM$_2$ also only gets result on foodmart dataset. In addition, FUIM$_{3}$ not only has advantages over FUIM$_{2}$ in runtime usage, but also performs well in memory consumption. This indicates that Theorem \ref{theo:fuzzyDC} with the tighter upper bound is more suitable than Theorem \ref{theo:sumRfu} and its upper bound. For example, on kosarak dataset, the average runtime consumption of FUIM$_3$ is nearly one-in-three of  FUIM$_{2}$ spends. In conclusion, FUIM can effectively mine correct HFUIs within reasonable time according to outstanding theorems.

\subsection{Processing Order of Items Analysis}

In general, the chosen processing order of items usually may influence the performance of a data mining algorithm. As mentioned previously, the final results of FUIM are complete and correct, while the mining performance maybe influenced. Therefore, the effect of different processing order of items is evaluated in this subsection. The results of two orders (including the FUUB-descending order and the FUUB-ascending order) on retail dataset are shown in Fig. \ref{fig:compare_order}. The symbol $N_1$ represents the FUUB-ascending order, and FUUB-decreasing order is denoted as $N_2$. Clearly, $N_1$ leads to the better performance in terms of runtime and the account of visited nodes. For example, when $\gamma$ is 0.1\textperthousand, $N_2$ takes more 300 seconds than $N_1$. Although the memory consumption of $N_1$ is often higher than $N_2$, the gap between $N_1$ and $N_2$ is narrow when threshold decreasing. Thus, it can be concluded that the choice of distinct processing order of items influences the number of fuzzy-lists generating for the mining task. Furthermore, we can determine that the FUUB-ascending order is acceptable in our novel algorithm for mining HFUIs.

%%%%%%%%%%   Different order of items  %%%%%%%%%%
\begin{figure}[!hbt]
	\centering
	\includegraphics[trim=10 0 10 0,clip,width=0.5\textwidth,scale=0.6]{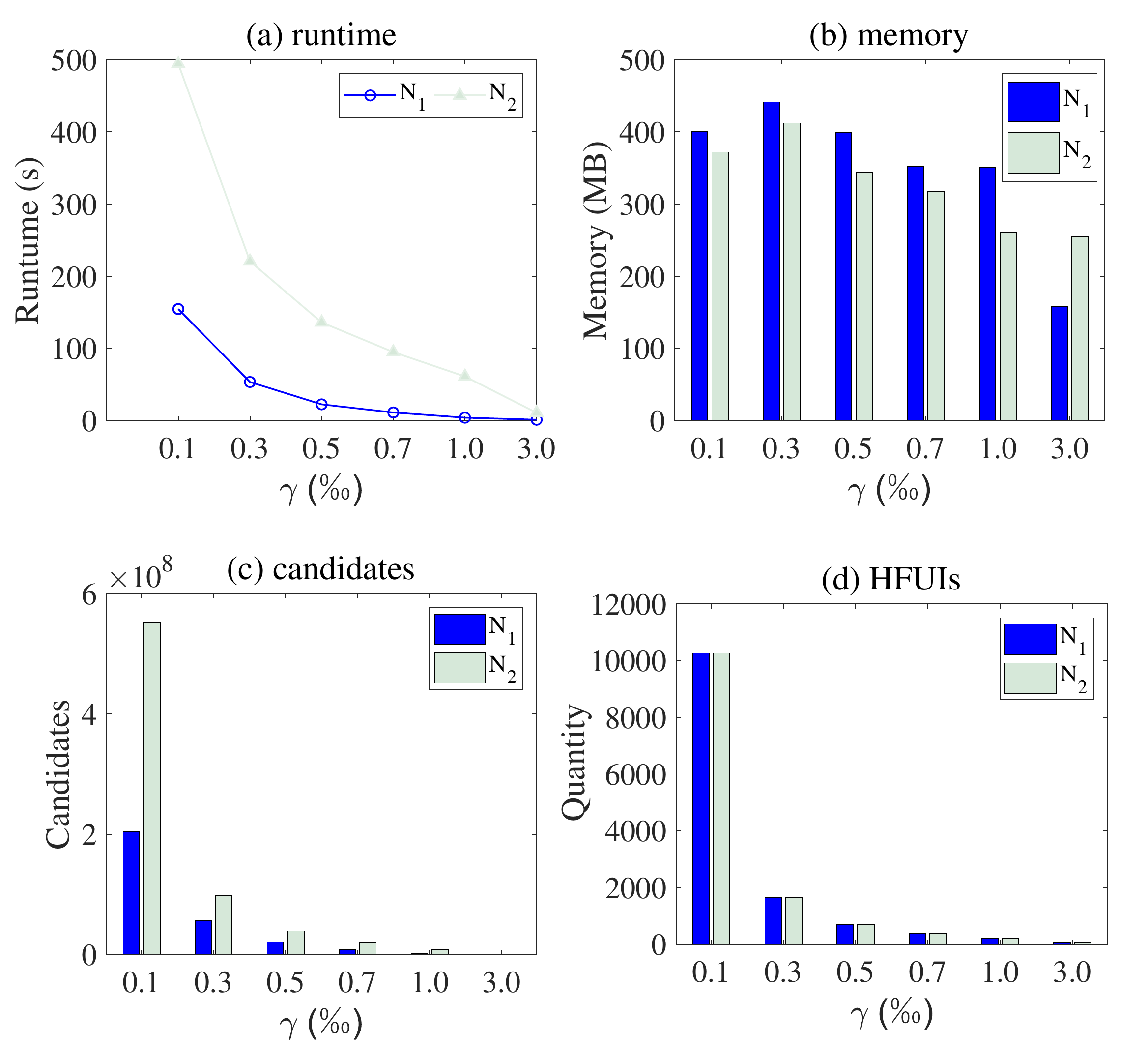}
	\caption{The comparison of two different orders about fuzzy items.}
	\label{fig:compare_order}
\end{figure}
%%%%%%%%%%   Different order of items  %%%%%%%%%%

%%%%%%%%%%   Scalability test  %%%%%%%%%%
\begin{figure}[!htbp]
	\centering
	\includegraphics[trim=10 10 10 0,clip,height=0.18\textheight,width=0.4\textwidth,scale=0.46]{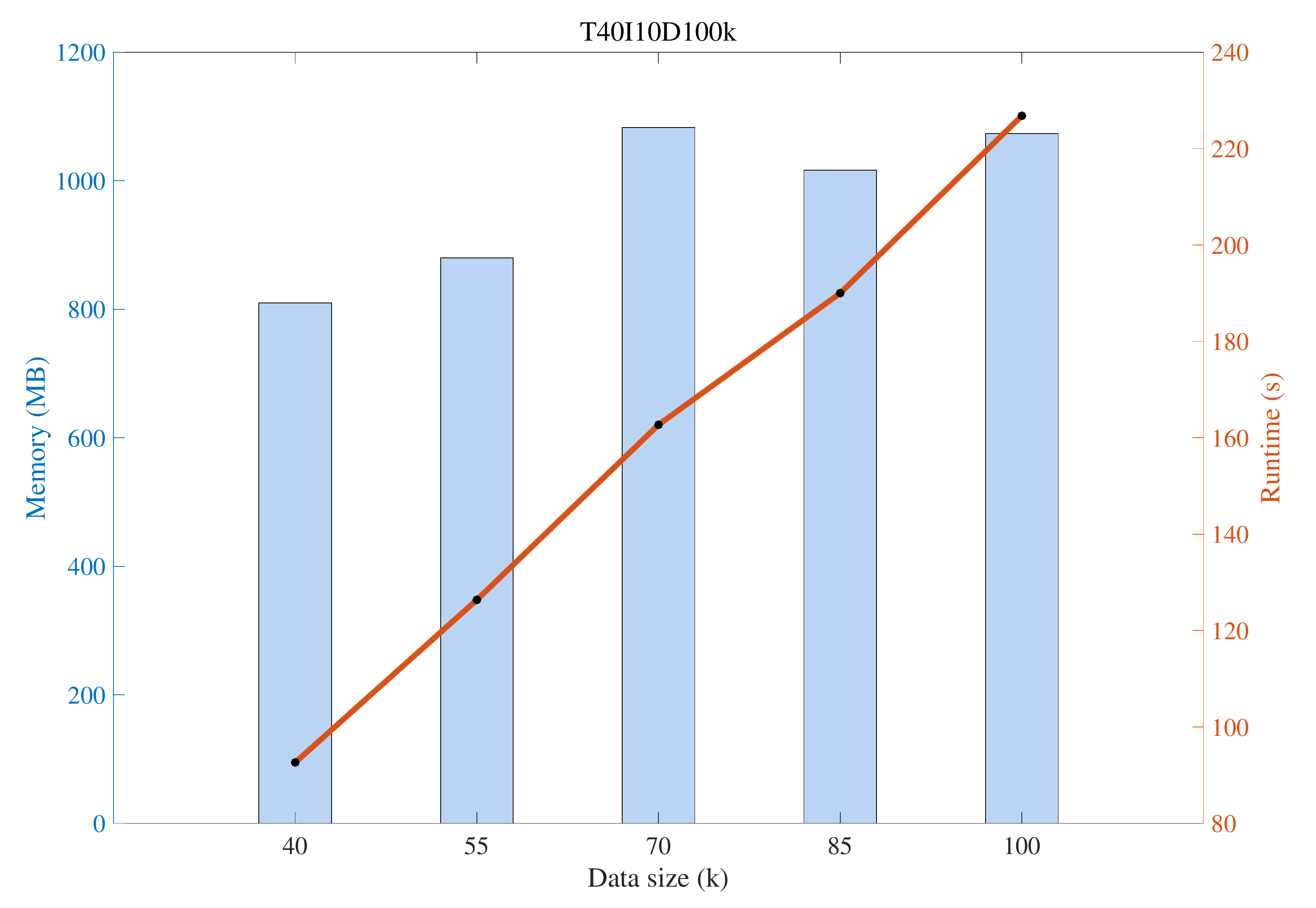}
	\caption{The scalability test.}
	\label{fig:compare_scalability}
\end{figure}
%%%%%%%%%%   Scalability test  %%%%%%%%%%

\subsection{Scalability Analysis}

The last experiment is the scalability test on a synthetic dataset T40I10D100k. In each test, the dataset size ($|D|$) is respectively 40k, 55k, 70k, 85k, and 100k transactions when $\gamma$ is fixed to 0.1\%. Fig. \ref{fig:compare_scalability} shows the results in terms of runtime usage and memory consumption. It can be seen that both the runtime and memory usage increase smoothly along with the dataset size increases. In conclusion, the FUIM algorithm has a well scalability.

In a word, FUIM significant outperforms the TPFU algorithm. For example, consider the execution time, FUIM can be up to least three orders of magnitude faster than TPFU. Aa a two-phase model, TPFU firstly generates a huge number of candidates using level-wise mechanism, then scans the dataset again to calculate the final fuzzy utility value of each candidate, and finally discovers final HFUIs. However, in our proposed FUIM, the remaining fuzzy utility is utilized to obtain a powerful upper-bound, which reduces many unpromising fuzzy itemsets and avoids to perform the depth-first searching deeply. Whatever on sparse or dense datasets (e.g., kosarak and mushroom), without remaining fuzzy utility pruning strategies, FUIM cannot get correct result within 10,000 seconds. Furthermore, the ``mining during constructing" property in FUIM can avoid consuming too much memory. For example, when $\gamma$ is set to 0.001, FUIM only consumes 51.84 MB memory, while the level-wise model TPFU requires 698.96 MB memory on foodmart dataset.

\section{Conclusion and Future Works}
\label{sec:conclusion}

This work presents a new fuzzy utility mining algorithm named FUIM to find complete and correct HFUIs in quantitative transaction datasets. FUIM utilizes the minimal operation mechanism to construct itemset and maximum operation mechanism to calculate upper-bounds on fuzzy utility. Different from the existing algorithms, evidently, an effective remaining upper-bound model is developed to reduce the searching space. The sufficient experimental results reveal the proposed fuzzy-list-based FUIM algorithm can perform well when working on both synthetic and real datasets under different thresholds. Fuzzy-driven utility mining is an interesting topic. In the future, we will further improve the efficiency of the FUIM algorithm, or propose more effective data structure than fuzzy-list. How to apply FUIM to address stream data mining  and on-shelf availability is still challenging.

% Can use something like this to put references on a page
% by themselves when using endfloat and the captionsoff option.
\ifCLASSOPTIONcaptionsoff
  \newpage
\fi

% references section
\bibliographystyle{IEEEtran}
% argument is your BibTeX string definitions and bibliography database(s)
\bibliography{FUIM}

% that's all folks
\end{document}